\newcommand\independent{\protect\mathpalette{\protect\independenT}{\perp}}
\def\independenT#1#2{\mathrel{\rlap{$#1#2$}\mkern2mu{#1#2}}}
\newtheorem{theorem}{Theorem}[section]
\newtheorem{proposition}[theorem]{Proposition}
\newtheorem{lemma}[theorem]{Lemma}
\theoremstyle{definition}
\newtheorem{definition}[theorem]{Definition}
\theoremstyle{remark}
\numberwithin{equation}{section}
\newcommand{\dd}{\mathrm{d}}
\newcommand{\ev}{\mathrm{ev}}
\newcommand{\R}{\mathbb{R}}
\newcommand{\N}{\mathbb{N}}
\newcommand{\Z}{\mathbb{Z}}
\newcommand{\cP}{\mathcal{P}}
\newcommand{\oT}{\overline{T}}
\newcommand{\llangle}{\langle\langle}
\newcommand{\rrangle}{\rangle\rangle}
\begin{document}

\title{Iterated integrals and population time series analysis}

\author{Chad Giusti}
\address{Department of Mathematical Sciences, University of Delaware, Newark, DE 19716}
\email{cgiusti@udel.edu}

\author{Darrick Lee}
\address{Department of Mathematics, University of Pennsylvania, Philadelphia, PA 19104}
\email{ldarrick@sas.upenn.edu}

\date{\today}

\begin{abstract}
One of the core advantages topological methods for data analysis provide is that the language of (co)chains can be mapped onto the semantics of the data, providing a natural avenue for human understanding of the results. Here, we describe such a semantic structure on Chen's classical iterated integral cochain model for paths in Euclidean space. Specifically, in the context of population time series data, we observe that iterated integrals provide a model-free measure of pairwise influence that can be used for causality inference. Along the way, we survey recent results and applications, review the current standard methods for causality inference, and briefly provide our outlook on generalizations to go beyond time series data.
\end{abstract}

\maketitle

The growing availability of population time series data drawn from observations of complex systems is driving a concomitant demand for analytic tools. Of particular interest are methods for extracting features of the time series which provide human-understandable links between the observed function and the unknown structure or organizing principles of the system. 

Over the last decade, substantial work has been done using persistent homology for time series analysis, including~\cite{cohen-steiner_vines_2006,munch_probabilistic_2015,perea_sliding_2015}. However, there are still substantial mathematical and conceptual barriers to direct interpretation of persistence diagrams in terms of the underlying data; most successes have come from statistical analyses of families of diagrams, which provide some measure of discriminatory power between systems. Thus, it is common to rely on persistence for classification. However, the success of such a program must be measured against the capabilities of modern machine learning tools, which appear capable of being tuned to out-perform topological methods. Using the results of topological computations as a pre-processing step for machine learning tools has been successful, providing a rich-but-low-dimensional feature set which retains strong discriminatory power, but human interpretation of the results suffers from the same difficulties as before.

It is the authors' opinion that one of applied topology's greatest potential advantages is the ability to ask specific, fundamentally qualitative questions of data sets and compute answers in a context and language that humans can interpret. The machinery of (co)homology provides a blueprint for asking and answering such questions, in the form of (co)chain models. However, rather than encoding data and then searching for meaning in the (co)homology, the authors propose selecting or designing the encoding topological space explicitly for the purpose of leveraging a (co)chain model which naturally encodes questions and answers of interest. 

This is a nuanced undertaking, perhaps best undertaken in the context of a collaboration between mathematicians and scientific domain experts. However, in the case of certain general data types, we can rely on the substantial extant literature on cochain models in algebraic topology for inspiration. For example, in the case of our motivating question about families of time series, we can make use of the \emph{iterated integral} model for cochains on $P\mathbb{R}^N$, originally developed by K.~T.~Chen~\cite{chen_iterated_1954,chen_integration_1957,chen_integration_1958,chen_iterated_1977}, more recently adapted to the study of stochastic differential equations~\cite{friz_multidimensional_2010,lyons_differential_2007}, and finally picked up by the machine learning community in the guise of \emph{path signatures} as a feature set for paths. In this paper, we will survey path signatures, the $0$-cochains in Chen's iterated integral model, and their fundamental properties, discuss how they have been applied to characterize \emph{cyclic} structure in observed time series, and offer a new interpretation of lower-order iterated integrals as a measure of causality among simultaneously observed time series. Finally, we briefly provide our outlook on how higher cochains, and cochain models of more general mapping spaces may be leveraged for data analysis beyond time series.

\section{Path signatures as iterated integrals}
\label{sec:iterated}


Consider a collection of $N$ simultaneous real-valued time series, $\gamma_i : [0, 1] \to \mathbb{R}$, $i = 1, \dots, N$, thought of as coordinate functions for a path $\Gamma \in P\mathbb{R}^N = C([0, 1], \mathbb{R}^N)$. Foundational work by K.T.~Chen used \textit{iterated integrals} to produce a rational cochain model for this space.

\begin{definition}
\label{def:ii}
Suppose $\dd x_1, \ldots, \dd x_N$ are the standard 1-forms for $\R^N$. For $t\in [0,1]$, let $\Gamma_t = \Gamma|_{[0,t]}$. For $i \in [N]$, define a path
\begin{equation*}
    S^i(\Gamma)(t) = \int_{\Gamma_t} \dd x_i = \int_0^t \Gamma^* \dd x_i(s) = \int_0^t \dd \gamma_i(s).
\end{equation*}
Let $I = (i_1, \ldots, i_k)$, where $i_l \in [N]$. Higher order paths are inductively defined as
\begin{equation*}
    S^I(\Gamma)(t) = \int_0^t S^{(i_1, \ldots, i_{k-1})}(\Gamma)(s) \dd \gamma_{i_k}(s).
\end{equation*}
The \textit{iterated integral} of $\Gamma$ with respect to $I$ is defined to be $S^I(\Gamma) \coloneqq S^I(\Gamma)(1)$.
\end{definition}

We can also define the iterated integral in a non-inductive way. Let $\Delta^k$ be the simplex
\begin{align*}
    \Delta^k = \left\{(t_1, \ldots, t_k)\, | \, 0 \leq t_1 \leq \ldots \leq t_k \leq 1\right\}.
\end{align*}
By direct computation, we have $\Gamma^* \dd x_i = \gamma'_i(t) \dd t$.
Then, the iterated integral of $\Gamma$ with respect to $I$ is equivalently defined as
\begin{equation}
\label{eq:ii_function}
    S^I(\Gamma) = \int_{\Delta^k} \gamma'_{i_1}(t_1) \gamma'_{i_2}(t_2) \ldots \gamma'_{i_k}(t_k) \, \dd t_1 \dd t_2 \ldots \dd t_k.
\end{equation}

These iterated integrals with respect to a fixed $I$ can be viewed as functions $S^I: P\R^N \rightarrow \R$ on $P\R^N$. Chen generalized this concept of iterated integration to produce \textit{forms} on $P\R^N$, which fit together to generate a cochain model of $P\R^N$. The iterated integrals defined here are the $0$-cochains of this cochain model. A summary of this construction is included in Appendix~\ref{sec:pathspace}, and a brief discussion of higher cochains is in Section~\ref{sec:outlook} .\medskip



In this section, we discuss various properties and characterizations of these iterated integrals, in preparation for their application to time series analysis in the following section. A wide class of paths in which these theorems hold is the class of bounded variation. For the remainder of the paper, we consider $\R^N$ equipped with the standard Euclidean norm, denoted $\|\cdot\|$. 

\begin{definition}
    Let $\Gamma \in P\R^N$. The \textit{1-variation} of $\Gamma$ on $[0,1]$ is defined as 
    \begin{equation}
    \label{eq:1var}
        |\Gamma|_{1-var} \coloneqq \sup_{(t_i) \in \cP([0,1])} \sum_{i} \| \Gamma(t_i) - \Gamma(t_{i-1}) \| ,
    \end{equation}
    where $\cP([0,1])$ is the set of all finite partitions of $[0,1]$. Paths in the class
    \begin{equation*}
        BV(\R^N) = \left\{ \Gamma \in P\R^N \; | \; |\Gamma|_{1-var} < \infty\right\}
    \end{equation*}
    are the paths of \textit{bounded variation} on $[0,1]$. Note that the $1$-variation is a norm on $BV(\R^N)$.
\end{definition}

 The collection of iterated integrals of $\Gamma$ with respect to all multi-indices $I$ is called the \textit{path signature} of $\Gamma$, denoted $S(\Gamma)$. The path signature can be represented as an element of the formal power series algebra of tensors (or also viewed as non-commutative indeterminates $X = \{X_1, \ldots, X_N\}$, denoted $\oT(\R^N)$,
\begin{equation*}
    S(\Gamma) = 1+ \sum_{k=1}^\infty \sum_{I = (i_1, \ldots, i_k)} S^I(\Gamma) X_{i_1} \otimes \ldots \otimes X_{i_k}.
\end{equation*}

Several  of  the  basic  properties  of  these  path  signatures  provide  evidence  that they are potentially useful for time series analysis.

\begin{proposition}
    Suppose $\Gamma \in BV(\R^N)$, $\phi:[0,1] \to [0,1]$ a strictly increasing function, $a\in \R^N$, and $\lambda \in \R$.The path signature is invariant under translation,
    \begin{align*}
        S(\Gamma + a) = S(\Gamma),
    \end{align*}
    and reparametrization,
    \begin{align*}
        S(\Gamma \circ \phi) = S(\Gamma).
    \end{align*}
    Additionally, under scaling, we have
    \begin{align*}
        S(\lambda\Gamma) = 1+ \sum_{k=1}^\infty \sum_{I = (i_1, \ldots, i_k)} \lambda^k  S^I(\Gamma) X_{i_1} \otimes \ldots \otimes X_{i_k}.
    \end{align*}
\end{proposition}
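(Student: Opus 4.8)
The plan is to reduce all three claims to corresponding statements about each individual iterated integral $S^I(\Gamma)$, since the path signature $S(\Gamma)$ is nothing more than the assembly of all the $S^I(\Gamma)$ into the formal power series in $\oT(\R^N)$. Thus it suffices to establish, for each fixed multi-index $I = (i_1, \ldots, i_k)$, that $S^I(\Gamma + a) = S^I(\Gamma)$, that $S^I(\Gamma \circ \phi) = S^I(\Gamma)$, and that $S^I(\lambda \Gamma) = \lambda^k S^I(\Gamma)$; the last identity reassembles into exactly the claimed scaling formula.

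For translation and scaling I would argue directly from the fact that each iterated integral is built from the differentials $\dd\gamma_{i_l}$ rather than from the coordinate functions themselves. Writing $\Gamma + a$ in coordinates replaces $\gamma_{i_l}$ by $\gamma_{i_l} + a_{i_l}$ with $a_{i_l}$ constant, so $\dd(\gamma_{i_l} + a_{i_l}) = \dd\gamma_{i_l}$ and every integrator in the inductive definition is unchanged, giving translation invariance. For scaling, the coordinates become $\lambda\gamma_{i_l}$ with $\dd(\lambda\gamma_{i_l}) = \lambda\,\dd\gamma_{i_l}$; since a $k$-th order iterated integral involves exactly $k$ such integrators, each contributes one factor of $\lambda$, yielding $S^I(\lambda\Gamma) = \lambda^k S^I(\Gamma)$. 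Both facts can equivalently be read off the non-inductive expression in~\eqref{eq:ii_function}, where the integrand is a product of $k$ coordinate differentials.

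Reparametrization invariance is the step requiring a genuine argument, and I would prove it by induction on $k$ combined with a change of variables. The natural inductive claim is the pointwise identity $S^I(\Gamma \circ \phi)(t) = S^I(\Gamma)(\phi(t))$ for all $t \in [0,1]$; evaluating at $t = 1$ with $\phi(1) = 1$ then gives $S^I(\Gamma \circ \phi) = S^I(\Gamma)$. The base case is the substitution $u = \phi(s)$ in $\int_0^t \dd(\gamma_i \circ \phi)(s) = \int_0^{\phi(t)} \dd\gamma_i(u)$, using $\phi(0) = 0$. For the inductive step I would substitute the inductive hypothesis $S^{(i_1,\ldots,i_{k-1})}(\Gamma \circ \phi)(s) = S^{(i_1,\ldots,i_{k-1})}(\Gamma)(\phi(s))$ into the defining integral and again change variables $u = \phi(s)$, which simultaneously transforms the integrand and the integrator $\dd(\gamma_{i_k} \circ \phi)(s)$ into their counterparts in $u$, producing $S^I(\Gamma)(\phi(t))$.

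The main obstacle is justifying the change-of-variables formula at the stated level of generality, namely for paths of bounded variation and a $\phi$ that is only strictly increasing with no differentiability assumed. Here $\dd\gamma_i$ must be read as a Riemann--Stieltjes (equivalently Lebesgue--Stieltjes) integrator, not via $\gamma_i'$, and I would appeal to the invariance of $\int f\,\dd g$ under a common continuous monotone reparametrization of integrand and integrator. I would also record the implicit normalization that a reparametrization is a strictly increasing bijection of $[0,1]$, hence automatically continuous with $\phi(0) = 0$ and $\phi(1) = 1$; this is precisely what makes the endpoint bookkeeping in the induction close up, and it rules out degenerate cases such as $\phi(t) = t/2$ that would traverse only part of $\Gamma$.
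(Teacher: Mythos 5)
Your proof is correct and follows essentially the same route as the paper's: reduce each claim to a statement about the individual $S^I$, read translation and scaling directly off the integrand in Equation~\ref{eq:ii_function}, and establish reparametrization invariance by induction on $k$ via a change of variables in the defining integral. You are in fact more careful than the paper, whose displayed computation uses $\phi'$ and $\gamma_i'$ (implicitly assuming differentiability that $BV$ paths and a merely strictly increasing $\phi$ need not have) and which leaves unstated the requirement that $\phi$ be a bijection fixing the endpoints; your Stieltjes formulation and explicit normalization of $\phi$ close both of those gaps.
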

\begin{proof}
    All three properties are straightforward to show using the definition of path signatures. Translation invariance is due to the translation invariance of the standard $1$-forms on $\R^N$. reparametrization invariance of the first level is given by
    \begin{align*}
        S^i(\Gamma\circ\phi)  =\int_0^1 (\gamma_i(\phi(t))' dt 
         = \int_0^1 \gamma_i'(\phi(t)) \phi'(t) dt 
         = \int_0^1 \gamma_i'(\tau) d\tau
         = S^i(\Gamma).
    \end{align*}
    Invariance for higher level signatures is shown by induction. Finally, the scaling property is clear from the definition of Equation~\ref{eq:ii_function}.
\end{proof}

Note that signatures can be defined for paths with an arbitrary closed interval $[a,b] \subset \R$ as a domain. However, without loss of generality due to reparametrization invariance, we only consider paths defined on $[0,1].$

These path signatures characterize classes of paths in $\R^N$ up to a \textit{tree-like equivalence}, originally defined in~\cite{hambly_uniqueness_2010}. In order to define the relation, we first consider concatenation of paths. Suppose $\Gamma_1, \Gamma_2 \in BV(\R^N)$, then define the concatenation of the two paths, $\Gamma_1 *\Gamma_2 \in BV(\R^N)$ by
\begin{equation}
    \Gamma_1 * \Gamma_2 (t) = \left\{
        \begin{array}{cl}
            \Gamma_1(2t) & : t\in [0,\frac12) \\
            (\Gamma_1(1) -\Gamma_2(0)) + \Gamma_2(2t-1) & : t \in [\frac12,1]
        \end{array}
    \right.
\end{equation}
The inverse of a path $\Gamma$ is defined to be the same path but running in the opposite direction, namely
\begin{align*}
    \Gamma^{-1}(t) = \Gamma(1-t).
\end{align*}

\begin{definition}[\cite{hambly_uniqueness_2010}]
    A path $\Gamma \in BV(\R^N)$ is a \textit{tree-like path} in $\R^N$ if there exists some positive real-valued continuous function $h$ defined on $[0,1]$ such that $h(0) = h(1) = 0$ and such that
    \begin{equation}
    \label{eq:treelike}
        \|\Gamma(t) - \Gamma(s)\| \leq h(s) + h(t) - 2 \inf_{u \in [s,t]} h(u),
    \end{equation}
    where $\| \cdot \|$ is the Euclidean norm on $\R^N$. The function $h$ is called a \textit{height function} for $\Gamma$ and if $h$ is of bounded variation, then $\Gamma$ is a \textit{Lipschitz tree-like path}.
\end{definition}

\begin{definition}
    Two paths $\Gamma_1, \Gamma_2 \in BV(\R^N)$ are \textit{tree-like equivalent}, $\Gamma_1 \sim \Gamma_2$, if $\Gamma_1 * \Gamma_2^{-1}$ is a Lipschitz tree-like path.
\end{definition}

It is shown in~\cite{hambly_uniqueness_2010} that tree-like equivalence is an equivalence relation in $BV(\R^N)$ and that concatenation of paths respects $\sim$. By defining the inverse of a path $\Gamma$ by $\Gamma^{-1}(t) = \Gamma(1-t)$, the equivalence classes $\Sigma = BV(\R^N)/\sim$ form a group under concatenation.  

The more abstract notion of a tree-like path is required when working with general bounded variation paths, but if we restrict ourselves to piecewise regular paths, we can use a much more intuitive characterization based on reductions. Specifically, a path $\Gamma$ is called \textit{reducible} if there exist paths $\alpha$, $\beta$, and $\gamma$ such that $\Gamma = \alpha * \gamma*\gamma^{-1} * \beta$ up to reparametrization, and called \textit{irreducible} otherwise. Furthermore, $\alpha * \beta$ is called a reduction of $\Gamma$.

A path $\Gamma \in P\R^N$ is \textit{regular} if $\Gamma'(t)$ is continuous and nonvanishing for all $[0,1]$. Chen~\cite{chen_integration_1958} showed that for any piecewise regular path $\Gamma$, we can obtain a unique (up to reparametrization) irreducible path by applying a finite number of reductions. Then, we can prove this simplified characterization.

\begin{lemma}
    Suppose $\Gamma\in P\R^N$ is a piecewise regular path. Then $\Gamma$ is a Lipschitz tree-like path if and only if its irreducible reduction is the constant path.
\end{lemma}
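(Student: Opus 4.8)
The plan is to prove the two implications separately, working directly with height functions and the reduction structure rather than passing through signatures (although a route via signatures and Chen's uniqueness of reductions is also available).

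For the reverse implication, I would observe that if the irreducible reduction of $\Gamma$ is constant, then reversing Chen's finite reduction procedure exhibits $\Gamma$ as obtained from a constant path by finitely many \emph{insertions}, each replacing a concatenation $\alpha * \beta$ by $\alpha * \delta * \delta^{-1} * \beta$ for some piecewise regular $\delta$ issuing from the junction $J = \alpha(1) = \beta(0)$. The crux is then an insertion lemma: this operation preserves Lipschitz tree-likeness. Given a height function $h_0$ for $\alpha * \beta$ with value $c = h_0(t_J)$ at the junction time $t_J$, I would define a height function $h$ for the new path that agrees with $h_0$ on the $\alpha$ and $\beta$ portions and, on the inserted excursion, rises from $c$ to $c + |\delta|_{1-var}$ and back to $c$ according to arc length along $\delta$. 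Verifying the tree-like inequality $\|\Gamma(t) - \Gamma(s)\| \le h(s) + h(t) - 2\inf_{[s,t]} h$ splits into three cases: both arguments in $\alpha \cup \beta$, both in the excursion, or one in each. The first follows from $h_0$ since the inserted bump never lowers the relevant infimum (on the excursion $h \ge c \ge \inf_{[s,t]} h_0$); the second is the elementary estimate that a chord is no longer than its subtended arc; and the mixed case follows from the triangle inequality $\|p - q\| \le \|p - J\| + \|J - q\|$ together with the two preceding bounds, which combine to give \emph{exactly} the required right-hand side. The base case is the constant path with $h \equiv 0$ (reading ``positive'' as nonnegative, consistent with $h(0)=h(1)=0$), and induction on the number of insertions completes this direction.

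For the forward implication, I would first note that $\Gamma$ and its irreducible reduction $\tilde\Gamma$ differ only by backtracks and hence are tree-like equivalent; since $\Gamma$ is tree-like it is equivalent to the constant path, so $\tilde\Gamma$ is tree-like as well. It therefore suffices to show that a non-constant, irreducible, piecewise regular path $\tilde\Gamma$ cannot be tree-like. Suppose it were, with height function $h$. Since $\tilde\Gamma$ is non-constant, $h$ is not identically zero (otherwise the inequality forces $\tilde\Gamma$ constant), so $h$ attains a maximum $M > 0$ at some interior $t^*$. For each level $v \in (0,M)$, let $(s_v, u_v)$ be the connected component of $\{h > v\}$ containing $t^*$; then $h(s_v) = h(u_v) = \inf_{[s_v,u_v]} h = v$, so the tree-like inequality yields $\|\tilde\Gamma(u_v) - \tilde\Gamma(s_v)\| \le 0$, i.e. $\tilde\Gamma(s_v) = \tilde\Gamma(u_v)$. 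As $v$ ranges over an interval $(v_0, M)$, this nested family of coincidences exhibits the outgoing arc $\tilde\Gamma|_{[t^*, u_{v_0}]}$ as the reverse of the incoming arc $\tilde\Gamma|_{[s_{v_0}, t^*]}$, so $\tilde\Gamma$ contains a backtrack $\delta * \delta^{-1}$ around its peak and is thus reducible, contradicting irreducibility.

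The step I expect to be the main obstacle is this last point in the forward direction: upgrading the level-set coincidences $\tilde\Gamma(s_v) = \tilde\Gamma(u_v)$ into a \emph{bona fide} reduction. This requires showing that the level matching $s_v \leftrightarrow u_v$ defines a genuine continuous, monotone reparametrization identifying the two arcs, and invoking piecewise regularity to guarantee that the path actually retraces an arc near $t^*$ rather than being locally non-injective in a more degenerate way; regularity is precisely what excludes such pathologies, since a regular arc is locally an embedding and so non-injectivity near the peak can only arise from the path folding back on itself. By comparison, the height-function bookkeeping in the insertion lemma is routine, the only delicate point being the mixed case, which I have indicated resolves to an equality rather than a strict inequality.
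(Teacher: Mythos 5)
Your proof is correct and follows essentially the same route as the paper: one direction uses a height function equal to arc length minus retraced length (yours built by induction on insertions, the paper's by a single global formula), and the other reduces to the irreducible case and derives a contradiction from the tree-like inequality at a maximum of the height function. The step you flag as the main obstacle---upgrading the level-set coincidences $\Gamma(s_v)=\Gamma(u_v)$ to a genuine backtrack---is exactly the point the paper dispatches with the bare assertion that irreducibility supplies $t_1,t_2$ with $\Gamma(t_1)\neq\Gamma(t_2)$, so your treatment is if anything more explicit there.
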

\begin{proof}

    First, suppose $\Gamma$ can be reduced to a point. Thus, $\Gamma$ can be constructed iteratively with a finite set of paths $\gamma_1, \ldots, \gamma_k$ as follows. Begin with $\Gamma_1 = \gamma_1 * \gamma_1^{-1}$, then $\Gamma_2 = \alpha_1 * \gamma_2 * \gamma_2^{-1}*\beta_1$, where $\Gamma_1 = \alpha_1 * \beta_1$. Continue in this manner until $\Gamma = \Gamma_k = \alpha_{k-1} * \gamma_k * \gamma_k^{-1} * \beta_{k-1}$. For example, consider the following point reducible path $\Gamma$ which can be built with two paths.
    
    \begin{figure}[!htbp]
    \centering
	    \includegraphics[width=0.6\textwidth]{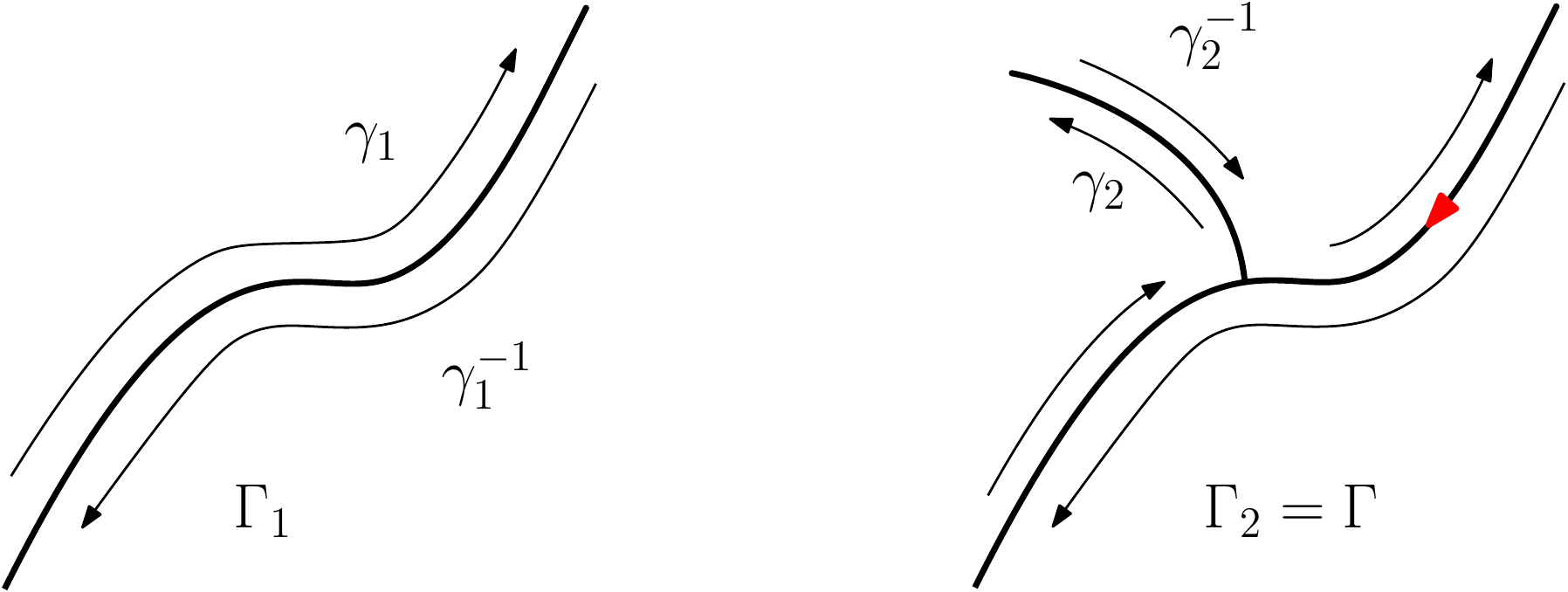}
    \end{figure}
    
    Note that $\Gamma$ will traverse each of $\gamma_1, \gamma_1^{-1}, \ldots, \gamma_k, \gamma_k^{-1}$ exactly once. Now, define $\Gamma^t$ to be the image of $\Gamma|_{[0,t]}$, and treat each of the $\gamma_i$ as the images. Then, define the height function to be
    \begin{equation*}
        h(t) = \sum_{i=1}^k \ell(\gamma_i \cap \Gamma^t) - \sum_{i=1}^k \ell(\gamma_i^{-1} \cap \Gamma^t),
    \end{equation*}
    where $\ell(\cdot)$ represents the length of the given segment. Intuitively, $h(t)$ is the length of the curve up to $\Gamma(t)$, where we subtract off any segment that has been retraced. In our example, suppose that the red arrow represents the point $\Gamma(t_0)$, which has begun to traverse along $\gamma_1^{-1}$. The corresponding height function at the point is the difference of path lengths
    
    \begin{figure}[!htbp]
    \centering
	    \includegraphics[height=11.5pt]{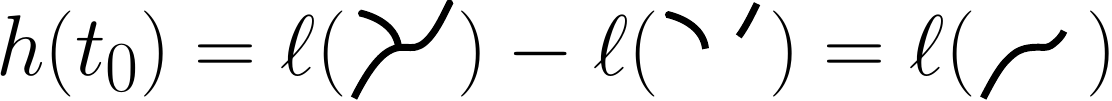}.
    \end{figure}
    
    At the end of the curve, all paths and inverse paths will have been traced so $h(1) = 0$. Note that $h(t_1) + h(t_2) - 2 \inf_{u \in [t_1,t_2]} h(u)$ represents the length of a curve from $\Gamma(t_1)$ to $\Gamma(t_2)$ which must be larger than $\|\Gamma(t_2) - \Gamma(t_1)\|$ since this is the straight line path. Lastly, the derivative of $\Gamma(t)$ is bounded over the closed interval, the arc length function and thus the height function is Lipschitz. Thus, $\Gamma$ is Lipschitz tree-like.
    
    Next, suppose $\Gamma$ is Lipschitz tree-like, and suppose to the contrary that $\Gamma$ cannot be reduced to a point. Let $\Gamma_r$ be the irreducible reduction of $\Gamma$. Then, $\Gamma * \Gamma_r^{-1}$ is point reducible, and thus Lipschitz tree-like by the first part of the proof. Thus, $\Gamma$ and $\Gamma_r$ are tree-like equivalent, so by the equivalence relation, if $\Gamma_r$ is not tree-like, then $\Gamma$ is also not tree-like. Thus, we assume $\Gamma$ is reduced so that it is irreducible.
    
    The height function $h(t)$ is Lipschitz continuous, so there exists some local maximum at $t = t_m$. Next, choose $t_1 < t_m$ and $t_2 > t_m$ such that the following hold:
    \begin{itemize}
        \item $h(t_1) = h(t_2) = h(t_m) - \epsilon$ for some $\epsilon > 0$,
        \item $\inf_{u \in [t_1,t_2]} h(u) = h(t_m) - \epsilon$, and
        \item $\Gamma(t_1) \neq \Gamma(t_2)$.
    \end{itemize}
    The first two conditions are possible because $h(t)$ is continuous, and the last condition is possible because $\Gamma$ is irreducible. Therefore, we have
    \begin{align*}
        \| \Gamma(t_2) - \Gamma(t_1)\| \leq h(t_2) + h(t_1) - 2 \inf_{u \in [t_1, t_2]} h(u) = 0,
    \end{align*}
    a contradiction. 
\end{proof}

Now we state the characterization theorem, which was proved by Chen~\cite{chen_integration_1958} for irreducible piecewise regular continuous paths, and generalized in~\cite{hambly_uniqueness_2010} to bounded variation paths $BV(\R^N) \subset P\R^N$.

\begin{theorem}[\cite{hambly_uniqueness_2010}] Suppose $\Gamma_1, \Gamma_2 \in BV(\R^N)$. Then $S(\Gamma_1) = S(\Gamma_2)$ if and only if they are tree-like equivalent.
\end{theorem}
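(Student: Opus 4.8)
The plan is to first collapse the two-sided statement into a single statement about loops, then dispatch the easy direction with a height function, isolating the hard direction as the real work.

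First I would establish the two multiplicative properties of the signature that make $BV(\R^N)/\sim$ a group compatible with $S$: Chen's identity $S(\Gamma_1 * \Gamma_2) = S(\Gamma_1) \otimes S(\Gamma_2)$ and the inverse formula $S(\Gamma^{-1}) = S(\Gamma)^{-1}$ in $\oT(\R^N)$, both provable directly from Equation~\eqref{eq:ii_function} by splitting the simplex of integration and by the change of variables $t \mapsto 1-t$ respectively. Since the elements of $\oT(\R^N)$ with leading term $1$ form a group under $\otimes$, these identities give $S(\Gamma_1) = S(\Gamma_2)$ if and only if $S(\Gamma_1 * \Gamma_2^{-1}) = 1$. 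Because tree-like equivalence of $\Gamma_1, \Gamma_2$ is by definition the assertion that $\Gamma_1 * \Gamma_2^{-1}$ is Lipschitz tree-like, the whole theorem reduces to the single claim: a path $X \in BV(\R^N)$ satisfies $S(X) = 1$ if and only if $X$ is Lipschitz tree-like.

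Next I would prove the easy direction, that a tree-like $X$ has trivial signature. The preceding Lemma already does this for piecewise regular paths whose irreducible reduction is constant, where retracing makes every iterated integral telescope to zero. For a general bounded variation $X$ with height function $h$, I would use the bound in Equation~\eqref{eq:treelike} to show that $X$ factors through the real tree obtained by collapsing $h$, and then argue that iterated integrals are unchanged under such collapses; alternatively one approximates $X$ in $1$-variation by piecewise regular, point-reducible paths and invokes continuity of $S$ with respect to the $1$-variation norm. Either route yields $S(X) = 1$.

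The hard part will be the converse: $S(X) = 1$ implies $X$ is Lipschitz tree-like. Here the inductive cancellation argument behind the Lemma is unavailable, since a general bounded variation path need not admit a finite sequence of reductions, and so one cannot simply read off a height function from a reduction. My plan follows Hambly--Lyons: pass to a constant-speed parametrization and develop $X$ into hyperbolic space $\mathbb{H}^n$, using the quantitative lower bound that relates the distance the developed endpoint travels from the origin to the length of $X$ after its retraced portions have been removed. Triviality of the signature forces the development to close up, and the lower bound then forces this \emph{reduced length} to vanish; converting the resulting reduction data into an explicit Lipschitz height function $h$ satisfying Equation~\eqref{eq:treelike}, together with a compactness argument controlling the limit of finite reductions, completes the proof. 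I expect the hyperbolic development estimate and the extraction of a genuine height function from it to be the principal obstacle, since this is precisely the step that has no counterpart in the piecewise regular setting handled by the Lemma.
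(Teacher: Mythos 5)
The paper does not actually prove this theorem: it imports it from Hambly--Lyons, and the only related argument supplied in-text is the Lemma identifying Lipschitz tree-like piecewise regular paths with those whose irreducible reduction is constant. So there is no in-paper proof to compare against, and your proposal has to be judged as a sketch of the genuine Hambly--Lyons argument --- which, structurally, it is. The reduction to the single claim ``$S(X)=1$ if and only if $X$ is Lipschitz tree-like,'' via Chen's identity and $S(\Gamma^{-1})=S(\Gamma)^{-1}$, is correct (though the inverse formula is not quite a one-line change of variables: $t\mapsto 1-t$ gives $S^{I}(\Gamma^{-1})=(-1)^{|I|}S^{\overline{I}}(\Gamma)$ for the reversed index $\overline{I}$, and you then need the antipode identity of the shuffle algebra, a consequence of Theorem~\ref{thm:shuffle}, to conclude that this is the tensor inverse). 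The hard direction is correctly located: develop the constant-speed path into hyperbolic space, use the lower bound on the displacement of the developed endpoint in terms of the reduced length, and use the fact that the developed endpoint is a function of the signature alone, so triviality of $S(X)$ forces the development to close up.

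Two places where the outline is thinner than it reads. First, the ``easy'' direction is not easy for general bounded variation paths: the route through ``$X$ factors through a real tree and iterated integrals are unchanged under such collapses'' assumes essentially what is to be proved, and the approximation route requires constructing point-reducible piecewise regular approximants converging in $1$-variation, which is exactly where the bounded variation of the height function (the ``Lipschitz'' in Lipschitz tree-like) gets used and which occupies a nontrivial portion of the source. Second, the entire mathematical content of the converse lives in the quantitative hyperbolic development estimate and in extracting a bounded-variation height function from the vanishing of the reduced length; your proposal names these steps and honestly flags them as the obstacle, but defers them wholesale. As a roadmap the proposal is faithful to the known proof; as a proof it is incomplete precisely at the points you identify.
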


In fact, this statement is even stronger when we consider the algebraic structure of the group of equivalence classes $\Sigma$ and the group-like elements in formal power series. An element $P \in \oT(\R^N)$ has a multiplicative inverse if and only if it has a nonzero constant term. Therefore, the restriction $\widetilde{T}(\R^N)$ to formal power series with constant term $1$ is a group under multiplication. Note that $S(\Gamma) \in \widetilde{T}(\R^N)$ by definition. One of Chen's original results~\cite{chen_iterated_1954} showed that the path signature map respects the multiplicative structure of paths and the formal power series. Namely, given $\Gamma_1, \Gamma_2 \in P\R^N$, we have
\begin{equation*}
    S(\Gamma_1 * \Gamma_2) = S(\Gamma_1) \otimes S(\Gamma_2).
\end{equation*}
Thus, the above theorem can be succinctly restated.
\begin{theorem}[\cite{hambly_uniqueness_2010}]
\label{thm:injective}
    The signature map $S: \Sigma \to \widetilde{T}(\R^N)$ is an injective group homomorphism.
\end{theorem}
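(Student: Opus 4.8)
The plan is to recognize that this theorem is essentially a group-theoretic repackaging of the preceding characterization theorem together with Chen's multiplicativity identity $S(\Gamma_1 * \Gamma_2) = S(\Gamma_1) \otimes S(\Gamma_2)$, so the task is to check that the two group structures line up under $S$. I would proceed in three steps: well-definedness, the homomorphism property, and injectivity. For well-definedness, I would first note that $S$ is a priori only defined on $BV(\R^N)$, so I must verify it is constant on $\sim$-classes in order to descend to $\Sigma = BV(\R^N)/\sim$. This is immediate from the characterization theorem: if $\Gamma_1 \sim \Gamma_2$ then $S(\Gamma_1) = S(\Gamma_2)$, so the induced map $S : \Sigma \to \widetilde{T}(\R^N)$ is well defined. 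I would also recall that concatenation respects $\sim$, so the group operation on $\Sigma$ makes sense, and that $S(\Gamma) \in \widetilde{T}(\R^N)$ holds by definition since every signature has constant term $1$.

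For the homomorphism property, I would invoke Chen's identity directly. The group operation on $\Sigma$ is induced by concatenation and the operation on $\widetilde{T}(\R^N)$ is tensor multiplication, so $S([\Gamma_1] * [\Gamma_2]) = S(\Gamma_1 * \Gamma_2) = S(\Gamma_1) \otimes S(\Gamma_2)$ is exactly the statement that $S$ is a homomorphism. I would then check compatibility with identity and inverses for completeness: the class of the constant path is the identity of $\Sigma$ and maps to $1 \in \widetilde{T}(\R^N)$, while applying the homomorphism property to $\Gamma * \Gamma^{-1}$, which is tree-like equivalent to the constant path, yields $S(\Gamma) \otimes S(\Gamma^{-1}) = 1$, so $S(\Gamma^{-1}) = S(\Gamma)^{-1}$.

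Finally, injectivity is the forward direction of the characterization theorem phrased in group-theoretic language. Since $S$ is a homomorphism, it suffices to show the kernel is trivial: if $S([\Gamma]) = 1$, then because $1 = S(c)$ for the constant path $c$, the characterization theorem gives $\Gamma \sim c$, so $[\Gamma]$ is the identity of $\Sigma$. Equivalently and more directly, $S([\Gamma_1]) = S([\Gamma_2])$ forces $\Gamma_1 \sim \Gamma_2$, i.e.\ $[\Gamma_1] = [\Gamma_2]$.

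I do not expect a genuine obstacle at this level, since all the analytic content has been absorbed into the characterization theorem, whose hard direction --- that equal signatures imply tree-like equivalence --- is the deep Hambly--Lyons result we are permitted to assume. What remains is formal bookkeeping to confirm that the two group structures match under $S$. The one point requiring a moment's care is ensuring that the equivalence relation defining $\Sigma$ is \emph{precisely} tree-like equivalence, so that the characterization theorem applies verbatim and delivers both well-definedness and injectivity from the same ``if and only if''.
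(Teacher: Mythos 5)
Your proposal is correct and follows exactly the route the paper intends: the theorem is presented there as a restatement of the preceding characterization theorem (equal signatures if and only if tree-like equivalent) combined with Chen's multiplicativity identity $S(\Gamma_1 * \Gamma_2) = S(\Gamma_1) \otimes S(\Gamma_2)$, which is precisely how you derive well-definedness, the homomorphism property, and injectivity. Your added bookkeeping about inverses and the trivial kernel is a harmless elaboration of the same argument.
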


That is, the path signature provides a complete set of invariants for paths up to tree-like equivalence, meaning any reparametrization-invariant property of such equivalence classes can be derived using the signature terms. Thus, any property of time series that does not rely on the parameterization can be extracted from the signature.

This point of view is further emphasized in recent results by Chevyrev and Oberhauser~\cite{chevyrev_signature_2018}, which state that a normalized variant of the signature map $\tilde{S}$ is \textit{universal} to the class $C_b(\Sigma,\R)$ of continuous bounded functions on $\Sigma$, with respect to the strict topology and is \textit{characteristic} to the space of finite regular Borel measures on $\Sigma$. Loosely speaking, universal to $C_b(\Sigma,\R)$ means that any continuous, bounded function $\phi: \Sigma \rightarrow \R$ can be approximated by a linear functional $\phi \approx \langle \ell, \tilde{S}(\cdot)\rangle$, where $\ell \in \widetilde{T}(\R^N)^*$. Namely, in the context of classification tasks, any decision boundary defined by a function in $C_b(\Sigma,\R)$ can be represented as a linear decision boundary in $\widetilde{T}(\R^N)$ under the signature map. This provides theoretical justification for the classification tasks discussed in the next section. Characteristic means that finite, regular Borel measures on $\Sigma$ are characterized by their expected normalized signatures (in the same way that probability measures with compact support on $\R^N$ are characterized by their moments).

\medskip

In addition to the multiplicative property of the signature, there exist a host of other properties, stemming from another early result of Chen that 
\begin{equation*}
    \log(S(\Gamma)) \coloneqq \sum_{j \geq 1} \frac{(-1)^{j-1}}{j} (S(\Gamma)-1)^j.
\end{equation*}
is a Lie series for any path $\Gamma$~\cite{chen_integration_1957}. This fact is equivalent to a shuffle product identity~\cite{reutenauer_free_1993}, providing an internal multiplicative structure for the path signature.

\begin{definition}
    Let $k$ and $l$ be non-negative integers. A \textit{$(k,l)$-shuffle} is a permutation of $\sigma$ of the set $\{1, 2, \ldots, k+l\}$ such that
    \begin{align*}
        \sigma^{-1}(1) < \sigma^{-1}(2) < \ldots < \sigma^{-1}(k)
    \end{align*}
    and
    \begin{align*}
        \sigma^{-1}(k+1) < \sigma^{-1}(k+2) < \ldots < \sigma^{-1}(k+1).
    \end{align*}
    We denote by $Sh(k,l)$ the set of $(k,l)$-shuffles.

Given two finite ordered multi-indices $I = (i_1, \ldots, i_k)$ and $J = (j_1, \ldots, j_l)$ , let $R = (r_1, \ldots, r_k, r_{k+1}, \ldots r_{k+1}) = (i_1, \ldots, i_k, j_1, \ldots, j_l)$ be the concatenated multi-index. The \textit{shuffle product} of $I$ and $J$ is defined to be the multiset
\begin{align*}
    I \shuffle J = \left\{ \left( r_{\sigma(1)}, \ldots r_{\sigma(k+l)}\right) \, | \, \sigma \in Sh(k,l)\right\}.
\end{align*}
\end{definition}

As an example, suppose $I = (1, 2)$ and $J = (2,3)$. Then
\begin{align*}
    I \shuffle J = \left\{ (1,2,2,3), (1,2,2,3), (2,1,2,3), (1,2,3,2), (2,1,3,2), (2,3,1,2) \right\}.
\end{align*}

\begin{theorem}[\cite{reutenauer_free_1993}]
\label{thm:shuffle}
    Let $I$ and $J$ be multi-indices in $[N]$. Then
    \begin{equation*}
        S^I(\Gamma)S^J(\Gamma) = \sum_{K \in I \shuffle J} S^K(\Gamma).
    \end{equation*}
\end{theorem}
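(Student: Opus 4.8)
The plan is to prove the shuffle identity directly from the simplex form of the iterated integral in Equation~\ref{eq:ii_function}, treating the product $S^I(\Gamma) S^J(\Gamma)$ as a single integral over a product of simplices and then decomposing that product into the top-dimensional simplices indexed by shuffles. Let me sketch the setup.

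Let $I = (i_1, \ldots, i_k)$ and $J = (j_1, \ldots, j_l)$. By Equation~\ref{eq:ii_function}, $S^I(\Gamma)$ is an integral of $\gamma'_{i_1}(s_1) \cdots \gamma'_{i_k}(s_k)$ over the ordered simplex $\Delta^k = \{0 \le s_1 \le \cdots \le s_k \le 1\}$, and similarly $S^J(\Gamma)$ is an integral over $\{0 \le t_1 \le \cdots \le t_l \le 1\}$ with integrand $\gamma'_{j_1}(t_1) \cdots \gamma'_{j_l}(t_l)$. First I would write the product as a single iterated integral over the product region $\Delta^k \times \Delta^l \subset [0,1]^{k+l}$, with integrand equal to the product of all $k+l$ derivative factors. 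The goal is to re-express this product region as a (measure-theoretically) disjoint union of ordered simplices in $[0,1]^{k+l}$.

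The key combinatorial step is the following. Introduce a single list of $k+l$ variables $u_1, \ldots, u_{k+l}$, where the first $k$ carry the labels from $I$ and the last $l$ carry the labels from $J$. The product region $\Delta^k \times \Delta^l$ imposes orderings only \emph{within} the first block and \emph{within} the second block, but no constraints between the two blocks. Off a measure-zero set where some coordinates coincide, every point of the product region lies in exactly one of the fully ordered chambers of $[0,1]^{k+l}$, and those chambers that are consistent with the two internal orderings are precisely indexed by the $(k,l)$-shuffles $\sigma \in Sh(k,l)$: a shuffle records how the two increasing blocks interleave into a single increasing sequence. On the chamber corresponding to $\sigma$, relabeling the variables in increasing order turns the integrand into $\gamma'_{r_{\sigma(1)}}(w_1) \cdots \gamma'_{r_{\sigma(k+l)}}(w_{k+l})$ over the standard ordered simplex $\Delta^{k+l}$, where $R = (i_1, \ldots, i_k, j_1, \ldots, j_l)$ is the concatenation. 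By Equation~\ref{eq:ii_function} this integral is exactly $S^K(\Gamma)$ for $K = (r_{\sigma(1)}, \ldots, r_{\sigma(k+l)}) \in I \shuffle J$, and summing over $\sigma$ gives the claim.

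I expect the main obstacle to be the clean bookkeeping of the two correspondences: first, that the shuffles index exactly the chambers compatible with both internal orderings (this is really the definition of $Sh(k,l)$ read geometrically), and second, that the multiset $I \shuffle J$ is correctly recovered with multiplicity, so that distinct shuffles producing the same labeled multi-index contribute separate equal terms to the sum rather than being collapsed. One should be careful that the decomposition of $\Delta^k \times \Delta^l$ only overlaps on the measure-zero diagonal set $\{u_a = u_b\}$, so the integrals add with no double-counting; Fubini together with the boundedness of $\Gamma$ (so that the integrands are integrable) justifies splitting the integral over the union into the sum over chambers. An equally valid alternative is an induction on $k+l$ using the inductive definition in Definition~\ref{def:ii} and the Leibniz-type rule $\dd(S^I S^J) = S^I \dd(S^J) + S^J \dd(S^I)$ applied to the partial-signature paths, but the geometric simplex-decomposition argument is the most transparent and is the route I would take.
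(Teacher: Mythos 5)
Your proposal is correct and follows essentially the same route as the paper: both write the product $S^I(\Gamma)S^J(\Gamma)$ as a single integral over $\Delta^k\times\Delta^l$ via Equation~\ref{eq:ii_function} and then decompose that product region into $(k+l)$-simplices indexed by the $(k,l)$-shuffles. Your additional care about the measure-zero overlaps and the multiset bookkeeping is a welcome refinement of the paper's terser statement of the ``standard decomposition,'' but it is not a different argument.
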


\begin{proof}
    Let $R = (r_1, \ldots, r_k, r_{k+1}, \ldots r_{k+1}) = (i_1, \ldots, i_k, j_1, \ldots, j_l)$. Writing out the signature on the left side of the equation using Equation~\ref{eq:ii_function}, we get 
    \begin{align*}
        \int_{\Delta^k \times \Delta^l}\gamma'_{r_1}(t_1) \ldots, \gamma'_{r_{k+l}}(t_{k+l}) \, \dd t_1 \ldots \dd t_{k+l},
    \end{align*}
    and the sum on the right side is
    \begin{align*}
        \sum_{\sigma \in Sh(k,l)} \int_{\Delta^{k+l}} \gamma'_{\sigma(r_1)}(t_1) \ldots \gamma'_{\sigma(r_{k+l})} (t_{k+l}) \dd t_1 \ldots \dd t_{k+l}.
    \end{align*}

    The equivalence of the two formulas is given by the standard decomposition of $\Delta^k \times \Delta^l$ into $(k+l)$-simplices,
    \begin{align*}
        \Delta^k \times \Delta^l &= \left \{ (t_1, \ldots, t_{k+l}) \, | \, 0 < t_1 < \ldots < t_k < 1, \, 0 < t_{k+1} < \ldots < t_{k+l} < 1\right\}\\
        & = \bigsqcup_{\sigma \in Sh(k,l)} \left\{ (t_{\sigma(1)}, \ldots, t_{\sigma(k+l)}) \, | \, 0 < t_1 < \ldots < t_{k+l} < 1\right\}.
    \end{align*}
\end{proof}

Note in particular that this implies the signature terms are not independent. For example, the shuffle formula says that $S^{2,1}(\Gamma) = S^1(\Gamma)S^2(\Gamma) - S^{1,2}(\Gamma)$. Thus computation of all signature terms, even truncated to a finite level, results in redundant information. Basis sets for Lie series exist~\cite{reutenauer_free_1993}, and the set of Lyndon bases have been considered for signature computations~\cite{reizenstein_calculation_2017,reizenstein_iisignature_2018}. Further pertinent results related to Lie series can be found in~\cite{reutenauer_free_1993}.

\medskip

 Another property of central importance in data analysis is continuity of the signature map. Let $k \in \N$, then define the map $\pi_k : \oT(\R^N) \rightarrow T^k(\R^N)$ to be the projection to the $k^{th}$ tensor level. Additionally, we equip $T^k(\R^N)$ with the norm
\begin{equation*}
    |P|_{k} \coloneqq \sqrt{ \sum_{i_1, \ldots, i_k} |P^{i_1, \ldots, i_k}|^2}, \quad \textrm{for all}\quad P = \sum_{i_1, \ldots, i_k} P^{i_1, \ldots, i_k} X_{i_1} \otimes X_{i_k}.
\end{equation*}
Recall that $BV(\R^N)$ is equipped with the $1$-variation norm defined in Equation~\ref{eq:1var}. With respect to these two norms, we obtain the following continuity result.


\begin{proposition}[\cite{friz_multidimensional_2010}]
    Suppose $\Gamma_1, \Gamma_2 \in BV(\R^N)$ and $L \geq \max_{i=1,2} |\Gamma_i|_{1-var}$. Then, for all $k \geq 1$, there exist constants $C_k > 0$ such that
    \begin{equation*}
        \left| \pi_k\left( S(\Gamma_1) - S(\Gamma_2)\right) \right|_k \leq C_k L^{k-1} |\Gamma_1 - \Gamma_2|_{1-var}.
    \end{equation*}
\end{proposition}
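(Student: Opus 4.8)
The plan is to reduce the statement to an inductive estimate on the time-indexed truncated signature and then close the induction with a Gronwall-type argument. Write $\Pi_k(\Gamma)(t) \coloneqq \pi_k(S(\Gamma_t)) \in T^k(\R^N)$ for the $k$-th tensor level of the signature of the restricted path $\Gamma_t = \Gamma|_{[0,t]}$, so that $\pi_k(S(\Gamma)) = \Pi_k(\Gamma)(1)$. The inductive definition of iterated integrals (Definition~\ref{def:ii}) packages into the single recursion $\Pi_k(\Gamma)(t) = \int_0^t \Pi_{k-1}(\Gamma)(s) \otimes \dd\Gamma(s)$, with $\Pi_0 \equiv 1$. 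Throughout I would use that the Euclidean norm $|\cdot|_k$ is a cross-norm, $|A \otimes b|_k = |A|_{k-1}\,|b|_1$ for $A \in T^{k-1}(\R^N)$ and $b \in \R^N$, together with the Riemann--Stieltjes bound $\bigl|\int_0^t A(s) \otimes \dd\Gamma(s)\bigr|_k \leq \int_0^t |A(s)|_{k-1}\,\dd\omega(s)$ for continuous $A$ and $\Gamma \in BV(\R^N)$, where $\omega(s) = |\Gamma|_{1-var;[0,s]}$.

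Two ingredients feed the induction. First is the factorial decay bound $|\Pi_k(\Gamma)(t)|_k \leq |\Gamma|_{1-var;[0,t]}^k/k! \leq L^k/k!$, which I would prove by induction on $k$: applying the two estimates above to the recursion gives $|\Pi_k(\Gamma)(t)|_k \leq \int_0^t |\Pi_{k-1}(\Gamma)(s)|_{k-1}\,\dd\omega(s)$, and substituting the inductive hypothesis yields $\int_0^t \omega(s)^{k-1}/(k-1)!\,\dd\omega(s) = \omega(t)^k/k!$. Second is a telescoping identity for the difference: subtracting the recursions for $\Gamma_1$ and $\Gamma_2$ and inserting a cross term gives
\begin{align*}
\Pi_k(\Gamma_1)(t) - \Pi_k(\Gamma_2)(t) &= \int_0^t \bigl[\Pi_{k-1}(\Gamma_1) - \Pi_{k-1}(\Gamma_2)\bigr](s) \otimes \dd\Gamma_1(s) \\
&\quad + \int_0^t \Pi_{k-1}(\Gamma_2)(s) \otimes \dd(\Gamma_1 - \Gamma_2)(s),
\end{align*}
which expresses the level-$k$ discrepancy in terms of the level-$(k-1)$ discrepancy plus a remainder controlled by the factorial bound.

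To close, set $g_k \coloneqq \sup_{t \in [0,1]} |\Pi_k(\Gamma_1)(t) - \Pi_k(\Gamma_2)(t)|_k$. The base case $k=1$ is immediate: $\Pi_1(\Gamma_1)(t) - \Pi_1(\Gamma_2)(t)$ is the increment of $\Gamma_1 - \Gamma_2$ over $[0,t]$, whose Euclidean norm is at most $|\Gamma_1 - \Gamma_2|_{1-var}$, so $g_1 \leq |\Gamma_1 - \Gamma_2|_{1-var}$. For the inductive step I would bound the two terms of the telescoping identity: the first by $g_{k-1}\,|\Gamma_1|_{1-var} \leq L\,g_{k-1}$, and the second by $\sup_s |\Pi_{k-1}(\Gamma_2)(s)|_{k-1} \cdot |\Gamma_1 - \Gamma_2|_{1-var} \leq \tfrac{L^{k-1}}{(k-1)!}\,|\Gamma_1 - \Gamma_2|_{1-var}$ using the factorial bound. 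This gives the recursion $g_k \leq L\,g_{k-1} + \tfrac{L^{k-1}}{(k-1)!}\,|\Gamma_1 - \Gamma_2|_{1-var}$, and a straightforward induction shows $g_k \leq \bigl(\sum_{j=0}^{k-1} \tfrac{1}{j!}\bigr) L^{k-1}\,|\Gamma_1 - \Gamma_2|_{1-var}$. Evaluating at $t=1$ yields the claim with $C_k = \sum_{j=0}^{k-1} 1/j! \leq e$.

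The routine parts are the norm bookkeeping; the points demanding care are the justification that the Riemann--Stieltjes integrals against the total-variation measure exist and satisfy the stated bound for bounded-variation integrands and integrators, and the use of the cross-norm property of $|\cdot|_k$, which is exactly what makes the factorial bound collapse cleanly. I expect the main subtlety to be the second term of the telescoping identity, where one integrates the level-$(k-1)$ signature of $\Gamma_2$ against $\dd(\Gamma_1 - \Gamma_2)$ and must control it uniformly; this is the step where both paths' $1$-variations are absorbed into the single bound $L$.
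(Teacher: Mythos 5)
The paper does not prove this proposition; it is quoted directly from the cited reference, so there is no in-paper argument to compare against. Your proof is correct and is essentially the standard argument from that reference: the tensor-level recursion, the factorial decay bound $|\pi_k(S(\Gamma_t))|_k \leq \omega(t)^k/k!$, and the telescoping decomposition of the level-$k$ difference into a level-$(k-1)$ difference term plus a remainder integrated against $\dd(\Gamma_1-\Gamma_2)$, closed by induction to give $C_k = \sum_{j=0}^{k-1}1/j! \leq e$. The only points worth making explicit are the ones you already flag: the identity $\int_0^t \omega(s)^{k-1}\,\dd\omega(s) = \omega(t)^k/k!$ uses continuity of $\omega$ (which holds since the paths here are continuous), and the Riemann--Stieltjes integrals of continuous tensor-valued integrands against BV integrators exist and obey the stated sup-times-variation bound.
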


 Additional analytic and geometric properties of the signature, along with applications to rough paths is found in~\cite{friz_multidimensional_2010}.

\section{Applications to time series analysis}

The signature provides a faithful embedding of bounded variation paths into the formal power series algebra of tensors. By considering the truncated signature at some level $L \in \N$,
\begin{equation*}
    S_L(\Gamma) = 1+ \sum_{k=1}^L \sum_{I = (i_1, \ldots, i_k)} S^I(\Gamma) X_{i_1} \otimes \ldots \otimes X_{i_k},
\end{equation*}
we obtain a finite feature set $\{S^I(\Gamma)\}_{|I| \leq L}$ for a multi-dimensional time series, whose length does not depend on the length of the time series. One may draw parallels between the signature representation of a path and various series representations of functions such as Taylor series or Fourier series. However, there are two important differences:
\begin{enumerate}
    \item The set of Taylor series and Fourier series coefficients are linearly independent functionals, and provide a minimal set of features to describe functions. However, as described in the previous section, the full collection of path signatures $S(\cdot)$ is not independent and includes redundant information, though there do exist bases for the signature such as the Lyndon basis~\cite{reizenstein_calculation_2017}.
    
    \item Series representations of functions is linear, whereas the path signature is highly nonlinear. On the one hand, nonlinearity of the signature may capture nontrivial, discriminatory aspects of paths with fewer features than a linear representation. However on the other hand, nonlinearity causes the inversion problem of finding a path with a given signature to be significantly more difficult. A general method for continuous paths is given in~\cite{lyons_inverting_2018}, and another method for piecewise linear paths is given in~\cite{lyons_hyperbolic_2017}. An algebraic-geometric approach to the problem was recently established in~\cite{amendola_varieties_2018}.
\end{enumerate}

The feature set obtained from the truncated signature has recently been used in a variety of machine learning classification problems. Early examples include applications to financial time series~\cite{gyurko_extracting_2013} and handwritten character recognition~\cite{yang_deepwriterid:_2015}. Other examples include classifying time series of self-reported mood scores to distinguish between bipolar and borderline personality disorders~\cite{arribas_signature-based_2017}, and classifying time series of different brain region volumes to detect diagnosis of Alzheimers~\cite{moore_using_2018}. The surveys~\cite{chevyrev_primer_2016,lyons_rough_2014} further discuss these applications, along with different ways to transform the time series such that the path is better suited for signature analysis.

The path signature feature set has also been successful in situations where the data isn't naturally a path. This is the case in~\cite{chevyrev_persistence_2018} in which the path signature is used in conjunction with persistent homology to build a feature set for barcodes, a topological summary of a data set. Barcodes have no standard description as a vector of fixed dimension, and this method provides such a description, allowing techniques from topological data analysis to be used with standard machine learning algorithms. The proposed pipeline consists of the the following compositions
\begin{equation*}
    \textbf{Met} \xrightarrow{PH} \textbf{Bar} \xrightarrow{\iota   } BV(\R^N) \xrightarrow{S_L} \R\llangle X \rrangle.
\end{equation*}
The map $PH: \textbf{Met} \rightarrow \textbf{Bar}$ refers to the persistent homology functor, which assigns a barcode to the input data represented by a metric space (such as a point cloud in Euclidean space)~\cite{ghrist_barcodes_2008}. The barcode can then be transformed into a path in Euclidean space by the transformation $\iota$, and finally the truncated signature $S_L$ is computed. Several transformations $\iota$ from barcodes to paths are considered in the paper, and several are applied in this pipeline resulting in state-of-the-art performance on some standard classification benchmarks.

\medskip

These applications demonstrate the utility of using path signature terms for classification tasks. However, as posited in the opening discussion, the power of topological tools lies in their interpretability. Thus, we now turn our attention to the question of how path signatures provide encode human-understandable properties of multivariate time series. We begin with the notion of signed area and cyclicity, which is a way to study lead-lag relationships between time series in the absence of periodicity. This weak structure is difficult to capture with classical methods for time series analysis, which rely on the regularity of the parameterization to decompose the time series. To address this dificulty, Baryshnikov \cite{baryshnikov_cyclicity_2016} suggested the use of path signatures to characterize cyclicity. Next, we consider how the second level signature terms can be viewed as a measure of causality.


\subsection{Cyclicity and Lead-Lag Relationships} 
We begin by explicitly computing the first two levels of the path signature. Again, we consider a collection of $N$ simultaneous time series $\gamma_i:[0,1] \rightarrow \R$, viewed as a path $\Gamma \in P\R^N$. By definition, we can compute
\begin{align*}
    S^i(\Gamma) &= \int_0^1 \gamma'_i(t) dt = \gamma_i(1) - \gamma_i(0), \\
    S^{i,j}(\Gamma) & = \int_0^1 S^i(\Gamma)(t) \gamma'_j(t) dt = \int_0^1 (\gamma_i(t) - \gamma_i(0)) \gamma'_j(t) dt.
\end{align*}
The second level signature terms of a path in $\R^2$ are shown as the shaded areas in the following figure, where solid blue represents positive area, and hatched red represents negative area.

\begin{figure}[!htbp]
\centering
	\includegraphics[width=0.9\textwidth]{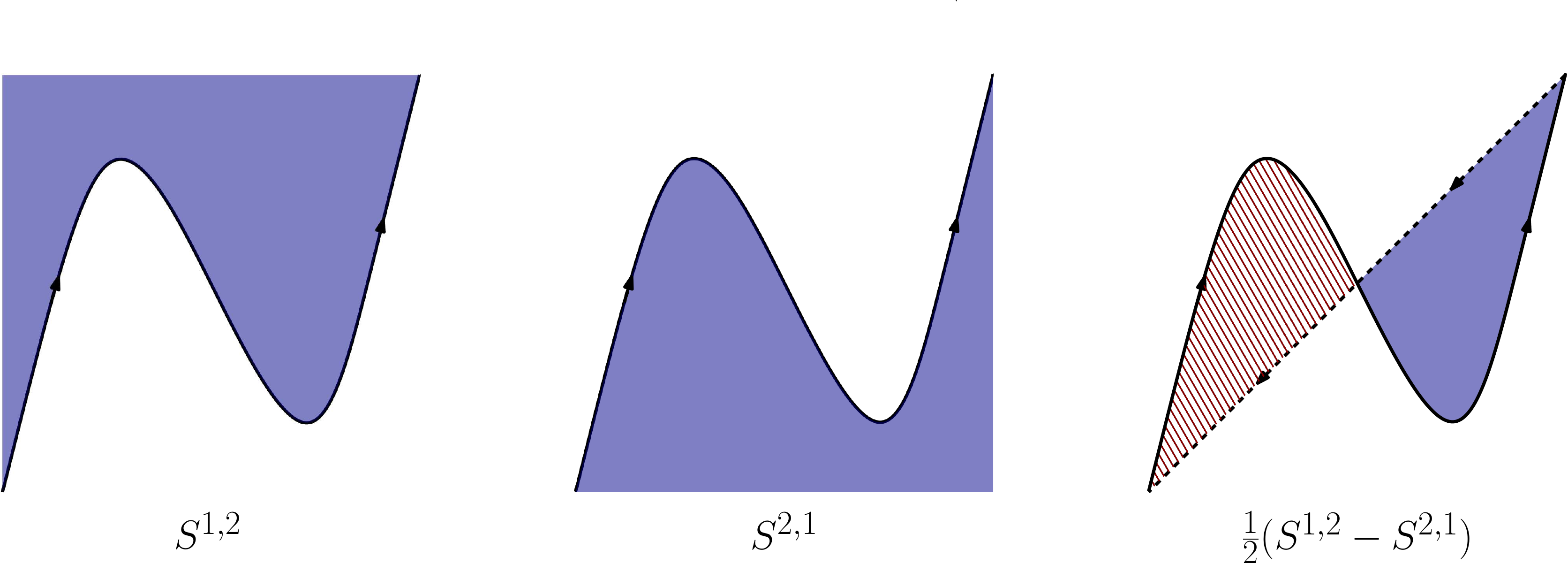}
\end{figure}

\noindent The third panel suggests that the linear combination $\frac12 (S^{i,j}(\Gamma) - S^{j,i}(\Gamma))$ encodes some information intrinsic to the path $\Gamma$. 

\begin{definition}
    Let $\alpha: [0,1] \rightarrow \R^2$ be a continuous closed curve defined by $\alpha(t) = (\alpha_1(t),\alpha_2(t))$ and $x = (x_1,x_2) \in \R^2 \backslash \textrm{im}(\alpha)$. We can rewrite $\alpha(t)$ in terms of polar coordinates $\alpha(t) = \left(r_{\alpha,x}(t), \theta_{\alpha,x}(t)\right)$ centered at $x$ where
    \begin{align*}
        r_{\alpha,x}(t) = |\alpha(t) - x|, \quad \theta_{\alpha, x}(0) = \tan^{-1}\left(\frac{\alpha_2(0) - x_2}{\alpha_1(0) - x_1}\right),
    \end{align*}
    and $\theta_{\alpha,x}(t)$ is defined via continuity. The \textit{winding number} of $\alpha$ with respect to $x$ is
    \begin{align*}
        \eta(\alpha,x) = \frac{\theta_{\alpha,x}(1) - \theta_{\alpha,x}(0)}{2\pi}.
    \end{align*}
\end{definition}

\begin{proposition}
    Suppose $\Gamma \in P\R^N$, and let $\tilde{\Gamma} = (\tilde{\gamma}_1, \ldots, \tilde{\gamma}_N)$ be the concatenation of $\Gamma$ with a linear path connecting $\Gamma(1)$ to $\Gamma(0)$. In addition, let $\tilde{\Gamma}_{i,j} = (\tilde{\gamma}_i(t), \tilde{\gamma}_j(t))$. Then
    \begin{equation*}
        A^{i,j}(\Gamma) \coloneqq \frac12\left( S^{i,j}(\Gamma) - S^{j,i}(\Gamma)\right) = \int_{\R^2} \eta(\tilde{\Gamma}_{i,j},x) \dd x
    \end{equation*}
    which is called the \textit{signed area}.
\end{proposition}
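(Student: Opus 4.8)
The plan is to reduce the statement to the classical fact that the signed area enclosed by a closed planar curve, computed via Green's theorem, equals the integral of its winding number over the plane, and to handle the passage from the open path $\Gamma$ to the closed loop $\tilde{\Gamma}_{i,j}$ using translation invariance. First I would unwind the definition using the explicit formulas $S^i(\Gamma)(t) = \gamma_i(t) - \gamma_i(0)$ and $S^{i,j}(\Gamma) = \int_0^1(\gamma_i(t) - \gamma_i(0))\gamma_j'(t)\,\dd t$ to write
\begin{equation*}
A^{i,j}(\Gamma) = \frac{1}{2}\int_0^1\left[(\gamma_i(t) - \gamma_i(0))\gamma_j'(t) - (\gamma_j(t) - \gamma_j(0))\gamma_i'(t)\right]\dd t.
\end{equation*}
Since $A^{i,j}$ is a fixed linear combination of signature terms, it is translation invariant by the earlier Proposition, so I may assume $\gamma_i(0) = \gamma_j(0) = 0$. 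This reduces the expression to the \emph{L\'evy area} $\frac{1}{2}\int_0^1(\gamma_i\gamma_j' - \gamma_j\gamma_i')\,\dd t$, which is the signed-area $1$-form $\frac{1}{2}(x_1\,\dd x_2 - x_2\,\dd x_1)$ integrated along the planar path $t\mapsto(\gamma_i(t),\gamma_j(t))$.

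Next I would close up the curve. The loop $\tilde{\Gamma}_{i,j}$ is this planar path followed by the straight chord from $(\gamma_i(1),\gamma_j(1))$ to the origin. A one-line computation shows that $\frac{1}{2}(x_1\,\dd x_2 - x_2\,\dd x_1)$ integrates to zero along any segment whose supporting line passes through the origin, and the closing chord is exactly such a radial segment. Hence
\begin{equation*}
A^{i,j}(\Gamma) = \frac{1}{2}\oint_{\tilde{\Gamma}_{i,j}}(x_1\,\dd x_2 - x_2\,\dd x_1).
\end{equation*}
(Equivalently, the closed-loop integral of this form is itself translation invariant, so the centering step is cosmetic.)

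The heart of the argument is then the identity $\frac{1}{2}\oint_{\alpha}(x_1\,\dd x_2 - x_2\,\dd x_1) = \int_{\R^2}\eta(\alpha, x)\,\dd x$ for the closed piecewise-regular curve $\alpha = \tilde{\Gamma}_{i,j}$. I would first note that $\eta(\alpha,\cdot)$ is integrable: it is integer-valued, locally constant off the measure-zero image of $\alpha$, and vanishes outside a large ball. For a \emph{simple} closed curve the identity is immediate, since $\frac{1}{2}(x_1\,\dd x_2 - x_2\,\dd x_1)$ has exterior derivative equal to the area form and $\eta \equiv \pm 1$ on the single bounded component, so Green's theorem plus the Jordan curve theorem finish it. The general case is the main obstacle, because $\tilde{\Gamma}_{i,j}$ may self-intersect. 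I would decompose $\R^2\setminus\mathrm{im}(\alpha)$ into its connected faces, on each of which $\eta$ takes a constant integer value $n_F$, and apply Green's theorem face by face; the delicate point is the orientation bookkeeping, since each arc of $\alpha$ separates two adjacent faces whose winding numbers differ by $\pm 1$, and one must verify these contributions reassemble into $\frac{1}{2}\oint_\alpha(x_1\,\dd x_2 - x_2\,\dd x_1)$ on one side and $\sum_F n_F\,\mathrm{area}(F) = \int_{\R^2}\eta\,\dd x$ on the other. A cleaner alternative for the write-up is to approximate $\alpha$ by a curve with only transverse double points and induct on the number of self-crossings, checking that resolving a crossing changes both sides by the same amount. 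Either route yields the claimed equality and completes the proof.
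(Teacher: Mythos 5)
Your proposal is correct and follows essentially the same route as the paper: reduce to the L\'evy area via translation invariance, observe that the closing chord contributes nothing, and convert the loop integral of $\frac12(x_1\,\dd x_2 - x_2\,\dd x_1)$ into the integral of the winding number. The only difference is that you treat the final step with more care than the paper, which simply invokes ``Stokes' theorem''; your face-decomposition (or crossing-resolution) argument for self-intersecting loops is exactly the detail being elided there, and you also retain the factor of $\frac12$ that the paper's displayed computation drops.
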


\begin{proof}
    We begin by assuming $\Gamma(0) = 0$ by translation invariance. Next, we show that $A^{i,j}(\Gamma) = A^{i,j}(\tilde{\Gamma})$. In the time interval $t\in[1/2,1]$, the components of the path $\tilde{\Gamma}$ can be written as
    \begin{equation*}
        \gamma_i(t) = m_i t + b_i
    \end{equation*}
    where $m_i=-b_i$ since the path must end at $\gamma_i(1) = 0$. Then, we have
    \begin{align*}
        A^{i,j}(\tilde{\Gamma}) & = \int_0^1 \gamma_i(t) \gamma_j'(t) - \gamma_j(t) \gamma'_i(t) dt \\
        & = A^{i,j}(\Gamma) + \int_{1/2}^1 (m_i t + b_i)m_j - (m_j t + b_j) m_i dt \\
        & = A^{i,j}(\Gamma).
    \end{align*}
    
    \noindent Now, suppose Finally, by applying Stokes' theorem, we get
    \begin{align*}
        A^{i,j}(\tilde{\Gamma}) &= \oint_{\tilde{\Gamma}} x_i dx_j - x_j dx_i \\
        & = \int_{\R^2} \eta(\tilde{\Gamma}_{i,j}, x) \dd x.
    \end{align*}

\end{proof}

In the third figure above, blue corresponds to a winding number of $1$ whereas red corresponds to a winding number of $-1$, resulting in the same interpretation as the formula. 
More generally, it was shown in~\cite{boedihardjo_uniqueness_2014} that all moments of the winding number of the curve $\tilde{\Gamma} - \tilde{\Gamma}(0)$ can be computed by linear combinations of signature terms of $\Gamma$, and conversely that the first four terms of $\log S(\Gamma)$ can be expressed using only the function $\eta(\tilde{\Gamma}-\tilde{\Gamma}(0),x)$.

\medskip

 The appearance of the winding number suggests that path signatures should be useful in studying periodic time series. However,  reparamerization-invariance  means that the signature naturally captures the broader and increasingly important class of \emph{cyclic} time series. Cyclic time series are those which can be factored through the circle
\begin{align*}
    \Gamma : [0,1] \xrightarrow{\phi} S^1 \xrightarrow{f} \R^N
\end{align*}
where $\phi$ is an orientation-preserving parametrization of the process. 
Cyclic phenomena arise naturally in a plethora of fields. Some simple examples include physiological processes such as breathing, sleep, the cardiac cycle, and neuronal firing; ecological processes such as the carbon cycle; and control processes involving feedback loops. Despite their repetitive nature, very rarely are such processes truly periodic, or even quasi-periodic, except to a coarse approximation. 


 One question of interest when studying cyclic processes is whether there exists a \emph{lead-lag relationship} between two or more signals; such a relationship may indicate causality, or simply provide a predictive signal. Consider the two pairs of time series $\Gamma^a = (\gamma^a_1, \gamma^a_2)$ and $\Gamma^b = (\gamma^b_1,\gamma^b_2)$,  shown on the left in the following figure. These two time series are chosen such that $\Gamma^b$ is simply a reparametrization of $\Gamma^a$, so there exists an orientation-preserving $\phi:[0,1]\rightarrow[0,1]$ such that $\Gamma^b = \Gamma^a \circ \phi$.

\begin{figure}[!htbp]   
\centering
	\includegraphics[width=\textwidth]{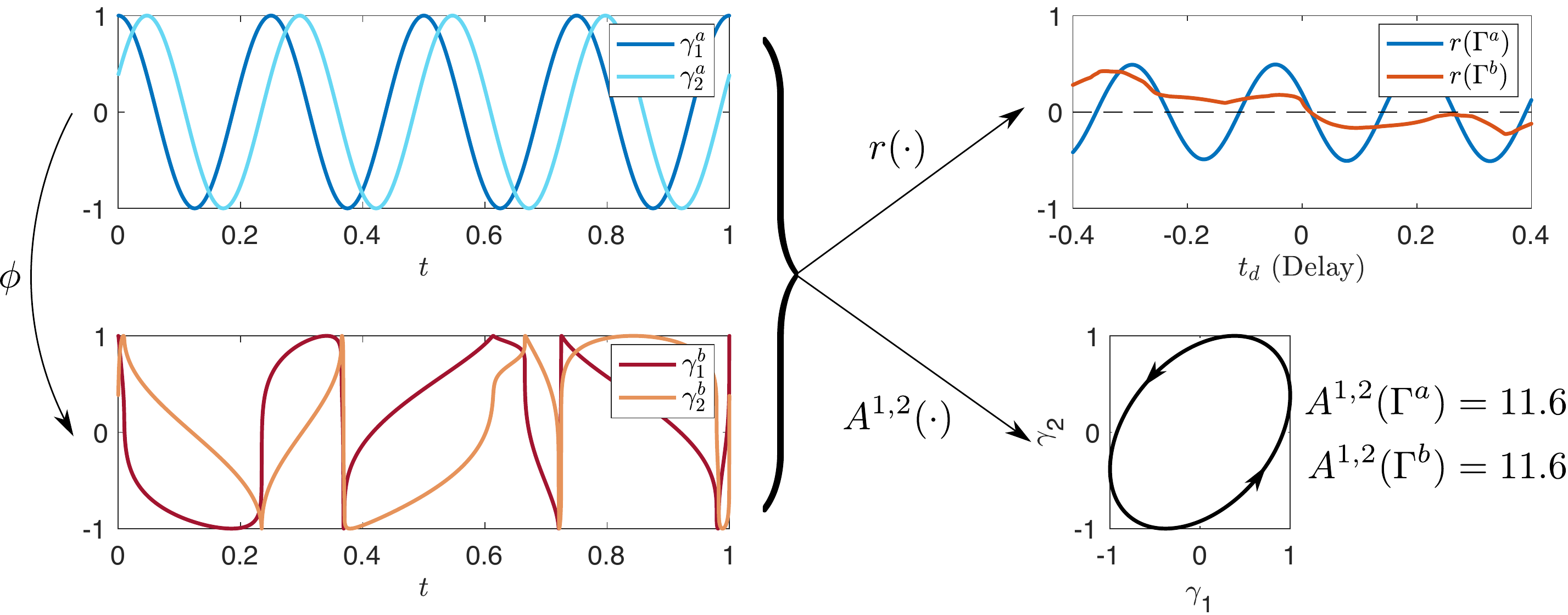}
\end{figure}

Perhaps the most common method for detecting lead-lag relationships in time series $\Gamma:[0,T] \rightarrow \R^2$ is the unbiased cross-correlation, defined by
\begin{equation*}
    r(\Gamma)(t_d) = \frac{1}{T-t_d} \int_0^T \gamma_1(t) \gamma_2(t-t_d) dt,
\end{equation*}
where $T$ is the total length of the time series and $\Gamma(t) = 0$ when $t \notin [0,T]$. The unbiased cross correlation of both sets of time series are shown on the top right. The cross correlation of $\Gamma^a$ has a clear periodic structure of its own, suggesting that the presence of a cyclic process in which one signal leads the other. The distance between maxima provides an estimate of the period of the two signals, and the phase-shift an estimate of the time-delay between $\gamma_1^a$ and $\gamma_2^a$. However, the cross correlation of $\Gamma^b$ is irregular, and though it attains a large value near $t_d = -0.4$, this is clearly not the primary scale on which the system is demonstrating cyclic behavior -- indeed, a constant scale doesn't exist.

However, since $\Gamma^b$ is a reparametrization of $\Gamma^a$, they will have the same signed area $A^{1,2}(\Gamma^a) = A^{1,2}(\Gamma^b)$. Indeed, the curve traced out by $(\gamma_1,\gamma_2)$, shown in the bottom right, which winds around counter-clockwise 4 times, indicating the four ``events" in each time series. The positive signed area suggests a lead-lag relationship for both sets of time series; this equivalence arises because the path signature depends only on ordered, simultaneous measurements, rather than the time between measurements.

In general, we can apply such an analysis to multidimensional time series by calculating the signed area between every pair of time series. In the context of sampled data, this computation boils down to the dot product of vectors, so is computationally feasible even for large systems, and the additivity of the integrals over partitions of domains means the measure can easily be implemented for streaming data.

\begin{definition}
    Let $\Gamma \in P\R^N$ represent $N$ simultaneous time series. The \textit{lead matrix} of $\Gamma$ is an $N \times N$ skew-symmetric matrix with entries
    \begin{align*}
        (A)_{i,j} = A^{i,j}(\Gamma).
    \end{align*}
\end{definition}

\begin{figure}[!htbp]   
\centering
	\includegraphics[width=0.85\textwidth]{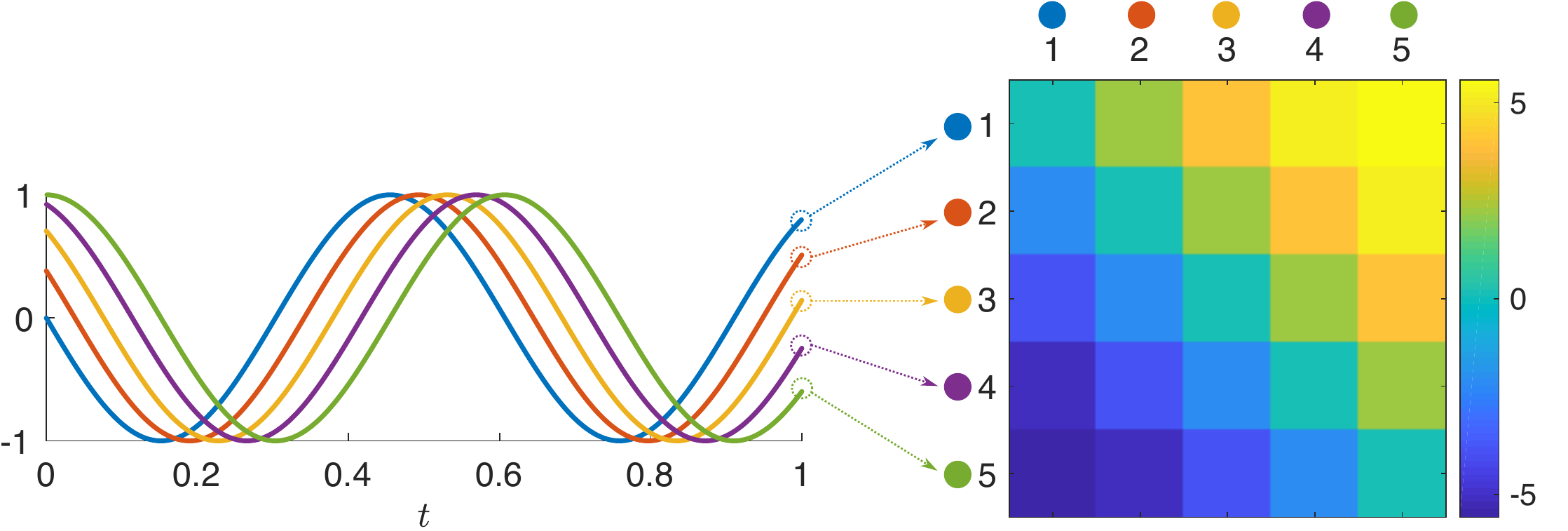}
\end{figure}

The matrix characterizes pairwise lead-lag behavior among a family of simultaneous time series. This method has been applied to the study of fMRI data, distinguishing between patients with tinnitus and those with normal hearing~\cite{zimmerman_dissociating_2018}. The skew-symmetric nature of this matrix lends itself to analogies with covariance matrices, however whereas the covariance matrix measures undirected and temporally independent relationships between variables, the lead matrix measures temporally directed relationships between variables.

\medskip

Of course, computing the signed area of the entire time series will only provide sensible lead-lag information if this behavior persists throughout the entire time interval. In many scenarios, this is not the case. For example, in gene regulatory networks there are cycles of activity initiated by irregular, external chemical signals. Different signals may induce different cycles of behavior, which may even have inverse lead-lag relationships, so integration across the entire time domain will provide negligible signature. Similarly, in an experimental environment we may perturb a system, necessarily leading to non-stationarity in the observed behavior, in which case the interesting signal would be the change in relationships acorss different epochs. Such controlled perturbations are, in particular, necessary for rigorous causality inference.

\begin{figure}[!htbp]   
\centering
	\includegraphics[width=\textwidth]{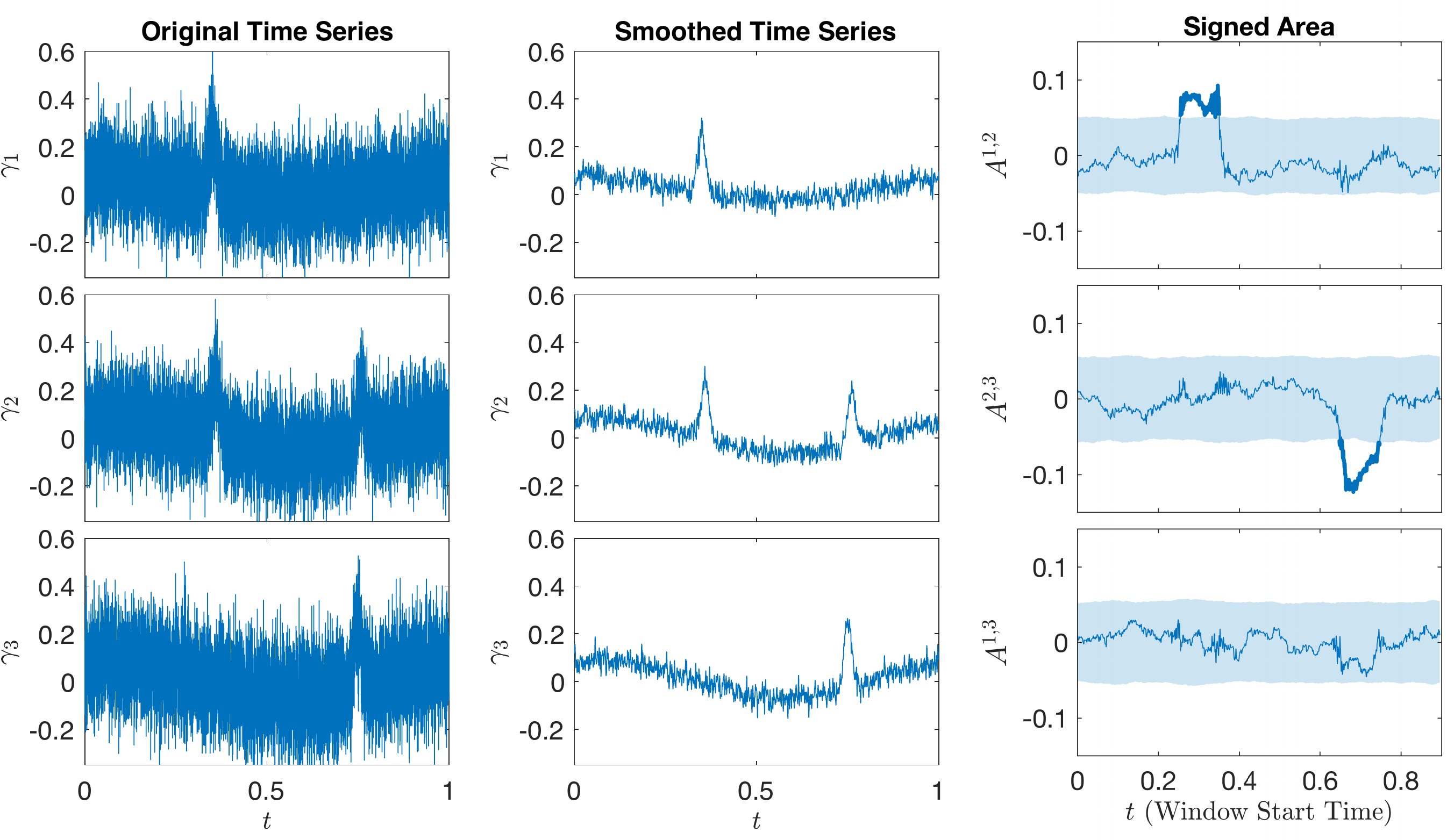}
\end{figure}

For example, consider the synthetic time series $\Gamma = (\gamma_1, \gamma_2, \gamma_3)$, as shown on the left column of the figure. We wish to detect whether or not there exist any lead-lag cycles that occur on a time scale that is small compared to the entire interval of the time series. Thus we perform signed area computations along a sliding window of the time series. We begin by convolving the time series with a narrow Gaussian as a smoothing preprocessing step to reduce noise. Next, we compute the three signed areas $A^{1,2}$, $A^{2,3}$ and $A^{1,3}$ along a sliding window of length $t=0.1$. 

To test statistical significance, we use a time shuffled null model, created by randomly permuting the elements of the time series within each component, and performing the same analysis (smoothing and sliding window signed area) on the shuffled time series. This is repeated 1000 times to generate a null distribution for the signed area curve of each component. The shaded portion of the signed area plot represents the $3\sigma$ confidence intervals in the third column panels.

While formal analysis of the probabilities requires a model of the underlying time series, we can empirically infer that a lead-lag relationship exists if the signed area is outside the confidence interval consecutively for a long sequence of consecutive time points. Thus, we likely we have an event with positive $A^{1,2}$, in which $\gamma_1$ leads $\gamma_2$, and also an event with negative $A^{2,3}$, in which $\gamma_3$ leads $\gamma_2$. 

This example demonstrates how the path signature may be used to detect lead-lag relationships in a model-free setting. The generality of the path signature can be exploited in other ways, and we describe a different interpretation of the second level signatures in terms of causality in the next section.

\subsection{Causality Analysis}

One of the fundamental steps in understanding the function of complex systems is the identification of causal relationships. However, empirically identifying such relationships is challenging, particularly when controlled experiments are difficult or expensive to perform. Three of the most common of approaches to causal inference are structural equation modelling, Granger causality, and convergent cross mapping. Like most approaches, these suffer from stringent assumptions that may not hold in empirical data. In order to understand these limitations, we first outline these methods, then describe how the second level signature terms can be applied as an assumption-free measurement of \textit{potential influences} in observational data, and explore some examples of their use.

We follow our previous notation and let $\Gamma(t) = (\gamma_1(t), \ldots, \gamma_N(t))$ denote a collection of $N$ simultaneous time series. In the following examples, we consider whether $\gamma_1(t)$ causally effects $\gamma_2(t)$; the rest of the time series should be interpreted as measured external factors. 

\medskip

Structural equation modelling (SEM)~\cite{wright_correlation_1921,haavelmo_statistical_1943} was one of the earliest developments in causal inference. It has more recently been recast into a formal framework by Pearl~\cite{pearl_causality:_2009} in which causal relationships can be determined. The fundamental operating principle of SEM is that causal assumptions are codified as hypotheses in the form of a directed graph, called a \textit{causal diagram}. The nodes represent all variables of interest, and directed edges represent possible causal influences. Note that the crucial information in such a diagram is the absence of edges.

\begin{figure}[!htbp]
\centering
	\includegraphics[width=0.2\textwidth]{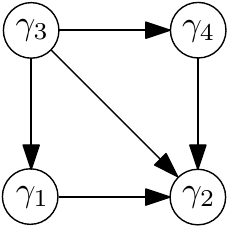}
\end{figure}

Given this causal diagram, the structural equation most commonly used in practice for time series assumes linearity, Gaussian errors and stationarity~\cite{chen_vector_2011, kim_unified_2007}. It can be viewed as a combination of linear SEM and a vector autoregressive (VAR) model,
\begin{equation*}
    \Gamma(t) = \sum_{i=0}^n \beta_i \Gamma(t-i) + U(t)
\end{equation*}
where $\beta_i$ is a matrix of effect sizes for a given time lag, and $U$ is a vector of random Gaussian variables which represents error. The causal assumptions are encoded in $\beta_i$, which has a zero entry for every directed edge that is omitted from the causal diagram. The goal is then to estimate the parameters $\beta_i$ based on empirical data to determine whether or not causal influences exist.

\medskip

Another measure of causality in common use is Granger causality~\cite{granger_investigating_1969}, which explicitly accounts for the temporal nature of causality, and is often used with time series data. It operates based on two main principles.
\begin{enumerate}
    \item (Temporal precedence) The effect does not precede its cause in time.
    \item (Separability) The causal series contains unique information about the effected series that is otherwise not available.
\end{enumerate}

Let $A \independent B \;|\; C$ denote that $A$ and $B$ are independent given $C$ and let $X^t = \{X(s) \, | \, s \leq t\}$ denote the history of $X(t)$ up to time $t$. 

\begin{definition}
    The process $\gamma_1(t)$ is \textit{Granger non-causal} for the series $\gamma_2(t)$ with respect to $\Gamma = (\gamma_1(t), \gamma_2(t), \gamma_3(t))$ if
    \begin{align*}
        \gamma_2(t+1) \independent \gamma_1^t \; | \; \gamma_2^t, \gamma_3^t
    \end{align*}
    for all $t \in \Z$; otherwise $\gamma_1(t)$ \textit{Granger causes} $\gamma_2(t)$ with respect to $\Gamma$.
\end{definition}
The idea behind this definition is that $\gamma_1$ does not causally influence $\gamma_2$ if future values of $\gamma_2$ are independent to all past values of $\gamma_1$, conditioned on past values of $\gamma_2$ and any external factors $\gamma_3$.



A measure of Granger causality is determined by a comparison of predictive power~\cite{bressler_wienergranger_2011}. Let $\Gamma = (\gamma_1,\gamma_2,\gamma_3)$ and $\widetilde{\Gamma} = (\gamma_2,\gamma_3)$, and we assume that these time series are modeled by a VAR process. To test the criteria of independence in Granger causality, we fit two VAR models
\begin{align}
    \Gamma(t) &= \sum_{i=1}^n A_i \Gamma(t) + U(t), \\
    \widetilde{\Gamma}(t) &= \sum_{i=1}^n \widetilde{A}_i \widetilde{\Gamma}(t) + \widetilde{U}(t).
\end{align}
Prediction accuracy of either model is determined by the variance of the residual $\textrm{var}(U(t))$. Thus, the empirical notion of Granger causality is defined by
\begin{equation*}
    C_{\gamma_1 \rightarrow \gamma_2} = \ln \frac{\textrm{var}(\widetilde{U}(t))}{\textrm{var}(U(t))}.
\end{equation*}

The separability assumption is untrue in many situations. Prominent examples are deterministic dynamical systems with coupling between variables such as in a feedback loop. This is clear from Taken's theorem.

\begin{theorem}[\cite{takens_detecting_1981}]
    Let $M$ be a compact manifold of dimension $m$. For pairs $(\phi, y)$, where $\psi: M \rightarrow M$ is a diffeomorphism and the observation function $y: M \rightarrow \R$ is smooth, it is a generic property that the map $\Psi: M \rightarrow \R^{2m+1}$ defined by 
    \begin{equation*}
    \Psi(x) = \left(y(x), y(\psi(x)), y(\psi^2(x)), \ldots, y(\psi^{2m}(x))\right)
    \end{equation*}
    is an embedding.
\end{theorem}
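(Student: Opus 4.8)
The plan is to follow the classical differential-topology strategy behind Whitney's embedding theorem, adapted to the constraint that we may only vary the \emph{data} $(\psi,y)$ rather than the delay map $\Psi$ itself. Since $M$ is compact, a smooth injective immersion is automatically an embedding, so it suffices to show that each of the two conditions---that $\Psi$ is an immersion, and that $\Psi$ is injective---holds for a residual (dense $G_\delta$) set of pairs $(\psi,y)$ in $\mathrm{Diff}(M)\times C^\infty(M,\R)$ equipped with the Whitney $C^\infty$ topology; a finite intersection of residual sets is again residual. The target dimension $2m+1$ is the smallest in which a generic smooth map of an $m$-manifold is an embedding, so the dimension count is exactly on the boundary of what is possible.

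I would first write out the two failure loci explicitly. With $\Psi(x)=(y(x),y(\psi x),\dots,y(\psi^{2m}x))$, the differential is
\[
D\Psi_x(v)=\bigl(dy_x(v),\,dy_{\psi x}(D\psi_x v),\,\dots,\,dy_{\psi^{2m}x}(D\psi^{2m}_x v)\bigr),
\]
so $\Psi$ fails to be an immersion at $x$ precisely when some nonzero $v\in T_xM$ is annihilated by all $2m+1$ covectors $dy_{\psi^j x}\circ D\psi^j_x$, and $\Psi$ fails to be injective when there are distinct $x,x'$ with $y(\psi^j x)=y(\psi^j x')$ for every $j=0,\dots,2m$. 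The crucial observation is that if the $2m+1$ points $x,\psi x,\dots,\psi^{2m}x$ are \emph{pairwise distinct}, then multiplying $y$ by bump functions supported near each point lets us perturb the values and $1$-jets of $y$ along the orbit independently; this supplies enough freedom to invoke the Thom parametric transversality theorem and conclude that, for generic $y$, neither failure occurs on the open set of such $x$.

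The main obstacle---and the reason the theorem is delicate---is the behavior at \emph{periodic points} of period $\le 2m$, where the orbit $x,\psi x,\dots,\psi^{2m}x$ has repetitions and the independent-perturbation argument collapses. I would handle this in two stages. First, a Kupka--Smale type genericity argument (transversality of the graph of $\psi^k$ with the diagonal in $M\times M$ for each $k\le 2m+1$) shows that for a residual set of diffeomorphisms $\psi$ there are only finitely many periodic points of period $\le 2m+1$, all nondegenerate in the sense that $D\psi^k$ has no eigenvalue $1$ at a period-$k$ point. Second, since only finitely many such orbits remain, the immersion and injectivity conditions along them reduce to a finite system of local, finite-dimensional constraints; nondegeneracy of $D\psi^k$ guarantees that a suitable choice of the finite jet of $y$ along each orbit satisfies them, and these conditions are open. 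Intersecting the residual set of good $\psi$, the residual set of $y$ giving immersion and injectivity off the periodic points, and the open-dense conditions at the finitely many periodic orbits yields the claimed generic property.

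The routine steps are the dimension counts and the bump-function perturbations away from periodic points; the genuinely technical work lies in establishing the Kupka--Smale finiteness and nondegeneracy of low-period periodic points, and in verifying that the constrained perturbations of $y$ along each nondegenerate orbit can be realized while keeping the global conditions open. Care is also needed in fixing the function-space topology so that the relevant bad sets are closed with empty interior and the Baire category argument applies.
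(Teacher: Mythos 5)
The paper does not prove this theorem; it is quoted from Takens' 1981 paper purely as background for the discussion of separability and convergent cross mapping, so there is no in-paper argument to compare yours against. On its own merits, your outline follows the standard (and essentially Takens' original) strategy: compactness reduces ``embedding'' to ``injective immersion,'' bump-function perturbations of $y$ supported near the $2m+1$ pairwise distinct points of a non-periodic orbit segment supply the parametric freedom needed for a transversality argument, and the genuine difficulty is concentrated at periodic points of low period, which a Kupka--Smale-type genericity statement for $\psi$ reduces to finitely many orbits.

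However, the nondegeneracy condition you impose at those periodic orbits is too weak to close the argument. At a point $x$ of period $k$, the covectors entering $D\Psi_x$ are $dy_{\psi^j x}\circ D\psi^j_x$; after grouping indices by residue mod $k$, immersivity reduces to whether iterates $((D\psi^{k}_x)^*)^i\eta$ of a single covector $\eta$ can span $T_x^*M$, which is possible only if $D\psi^k_x$ is non-derogatory. Takens secures this by requiring the eigenvalues of $D\psi^k$ at each low-period point to be \emph{distinct} (still a generic condition on $\psi$). Requiring only that $D\psi^k$ has no eigenvalue equal to $1$ does not suffice: if $x$ is a fixed point with $D\psi_x = 2\,\mathrm{Id}$ and $m\ge 2$, then every covector $dy_x\circ D\psi^j_x$ is a scalar multiple of $dy_x$, so $\Psi$ fails to be an immersion at $x$ for \emph{every} choice of $y$, even though no eigenvalue is $1$. (The ``no eigenvalue $1$'' condition is what you want for isolation of the periodic points and for the injectivity analysis near them, not for immersivity.) You also need, as a separate finite open-and-dense condition, that $y$ separates distinct periodic orbits of equal period. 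With the eigenvalue hypothesis strengthened to distinctness, the rest of your plan goes through along the lines of Takens' proof.
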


Here we treat $M$ as an invariant manifold of a dynamical system evolving according to a vector field $V$, and the diffeomorphism $\psi$ corresponds to the flow of $V$ with respect to negative time $-\tau$. The observation function is usually taken to be a projection map $\pi_i$ on to the $X_i$ coordinate. In this context, Taken's theorem states that the manifold $M$ is diffeomorphic to reconstructions via the delay embedding $\Psi_i$ using any of the projection maps $\pi_i$, assuming they are generic. Thus, if two variables $X_i$ and $X_j$ are coupled in the dynamical system, then information about the state of one variable $X_i$ exists in the history of another $X_j$.

\medskip

The final approach to causal inference that we describe takes advantage of this property of dynamical systems. The method of convergent cross mapping (CCM) was developed by Sugihara ~\cite{sugihara_detecting_2012} and later placed in a rigorous mathematical framework~\cite{cummins_efficacy_2015}. The motivation behind CCM is to understand the causal structure of an $N$ dimensional time series $\Gamma(t)$ which is a trajectory of an underlying deterministic dynamical system
\begin{align*}
    \gamma_i'(t) = V_i(\gamma_1,\ldots, \gamma_N).
\end{align*}
A component $\gamma_1(t)$ causally influences component $\gamma_2(t)$ if the $\gamma_2$ component of the vector field, $V_2(X)$ has a nontrivial dependence on $\gamma_1$. The idea is that if such a nontrivial dependence exists, then one can predict the states in $M_1$ based on the information in $M_2$. Prediction accuracy should increase as we include more time points in $\Gamma(t)$, and the convergence of prediction accuracy is used as the indication of causal influence. 

\medskip

The three methods of causal inference surveyed here were established based on different notions of causality, and are thus applicable in different scenarios. However, the practical implementations of SEM and GC depend on strong assumptions such as linearity, stationarity and Gaussian noise, which often do not hold for empirical data. Moreover, SEM requires a priori knowledge about the underlying process which may not be well established for complex data sets. CCM moves beyond linear and stationary assumptions to study complex nonlinear systems, but still depends on a dynamical systems model.

We propose the path signature as a model-free measure of causality, in which our only assumption is that of temporal precedence of causal effects. Namely, we wish to detect the \textit{observed influence} between the various components in our time series. We do not claim that observed influences are truly causal. Omitted external factors may confound observed variables, and various true causal pathways may result in spurious influences.

This approach is motivated by the equation for the second level of the signature, in the case where $\gamma_i(0) = 0$ for all components $i$,
\begin{align*}
    S^{i,j}(\Gamma) = \int_0^1 \gamma_i(t) \gamma_j'(t) dt.
\end{align*}
Here the term $\gamma_i(t)$ should be thought of as the distance from the mean of the path component. In practice, this is done by translating each component of the path such that it has mean $0$, normalizing the time series to have maximum value $1$ (either separately or as a group, depending on the intended application), and appending $\gamma_i(0) = 0$ at the beginning of each time series.  

With this context,  the integrand can be viewed as a measure of how the magnitude of $\gamma_i(t)$ \textit{influences} the change in $\gamma_j'(t)$. By integrating over the entire path, we obtain an aggregate measure of the influence of $\gamma_i$ on the change in $\gamma_j$ over the given time interval. As such, the second order signatures provide a measure of potential observed influence, indicating possible causal relationships between variables using only observations of time series, without any prior assumptions. Of course, this method will not be able to distinguish between true and spurious causal relations (due to confounders, for example). However, such caveats would necessarily apply to any system in the absence of a model; thus, in addition to providing a coarse measure of causality, one can view this method as a preprocessing step for the model-based methods described above.

\medskip

We close with a final example, considering the case that the system is known or suspected to be non-stationary. In this setting, a global measure of influence is inappropriate, as we are often interested in the change in such structure when the system changes modes. Fortunately, it is straightforward to modify the signature measure to detect temporally localized influences. This is done by studying the derivative of the signature, which is simply given by the integrand
\begin{equation*}
    (S^{i,j})'(\Gamma)(t) = \gamma_i(t)\gamma_j'(t).
\end{equation*}
Geometrically, this is the instantaneous area of the arc at the origin of the $(\gamma_i, \gamma_j)$-plane swept out by the pair of time series. If this measure has large magnitude on an interval, it suggests suggests that one of the series is strongly influencing the other during that epoch.

We demonstrate this method using a familiar example of dynamics which exhibit mode-switching.  Consider the time series $\Gamma(t) = (\gamma_1(t), \gamma_2(t), \gamma_3(t))$, which represents a portion of a discretized solution to the Lorenz equations
    \begin{align*}
        \gamma_1'(t) &= \sigma(\gamma_2(t) - \gamma_1(t)), \\
        \gamma_2'(t) &= \gamma_1(t)(\rho - \gamma_3(t)) - \gamma_2(t), \\
        \gamma_3'(t) & = \gamma_1(t)\gamma_2(t) - \beta\gamma_3(t),
    \end{align*}
    where we have taken the parameters $\sigma = 10$, $\rho = 28$, and $\beta = 8/3$. The equations are solved using the built-in \texttt{ode45} function in MATLAB. For preprocessing, each component has been translated so that it has mean $0$, and an additional point has been appended to the beginning of the time series so that it starts at the origin. Each component is individually normalized so that $\sup(\gamma_i(t)) - \inf(\gamma_i(t)) = 1$.

    In the following figure, all six second-level signature derivative terms are shown on the left, with plots of the path projected onto the corresponding plane. Note that the time axis has arbitrary units due to reparametrization invariance. As with the previous example, we use a time shuffled null model in which the same analysis is performed on the shuffled time series. The null distribution is generated by repeating this procedure 1000 times. The shaded portion of the signature plots correspond to the $3\sigma$ confidence intervals.
    
    The upper and lower bounds of the confidence intervals are outlined with red and green lines respectively. The time points at which the the signature derivative is either above or below the confidence interval are considered significant, and are respectively colored red or green in the plot on the right.
    
    We observe the expected result in the first row: the signature derivative picks out sections of the plot in which $\gamma_1$ is positive (negative) and $\gamma_2$ is increasing (decreasing). The opposite trend of sections in which $\gamma_1$ is positive (negative) and $\gamma_2$ is decreasing (increasing) is seen in the green time points. 

    \begin{figure}[!htbp]
    \centering
	    \includegraphics[width=0.9\textwidth]{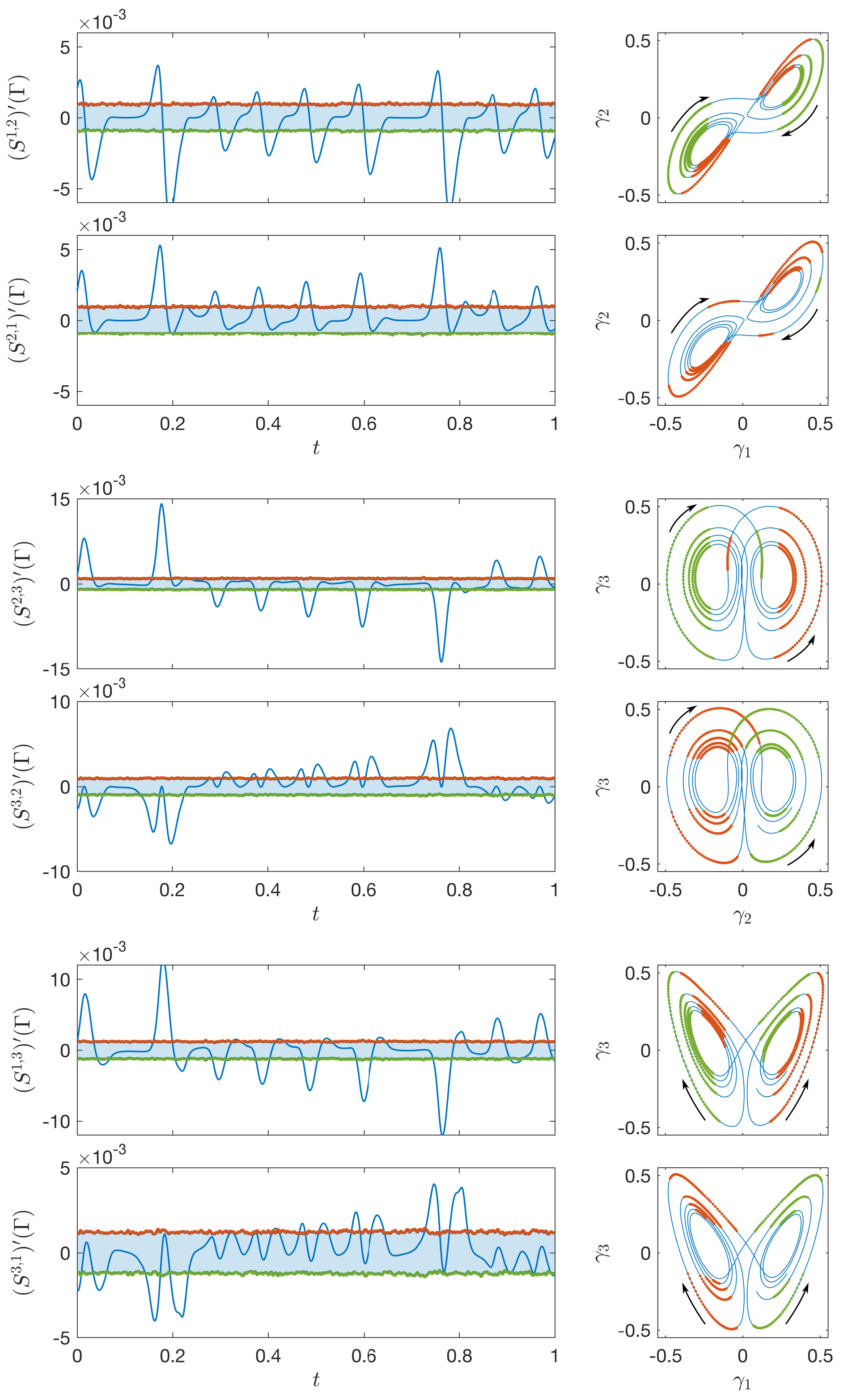}
    \end{figure}

    \bigskip
    
    
\section{Generalizations and Outlook}
\label{sec:outlook}

We have seen that path signatures provide a natural feature set for studying multivariate time series. In addition, we have discussed ways to view the second level signature terms in order to understand the path signature in an interpretable manner.
In this section, we outline two directions for generalizations of these ideas to more complex settings, which will be further discussed in forthcoming work by the authors.\medskip

The first direction is to consider the full Chen cochain model $Chen(P\R^N)$, alluded to in Section~\ref{sec:iterated} and further discussed in Appendix~\ref{sec:pathspace}, which is a subcomplex of the de Rham complex of differential forms on $P\R^N$. The iterated integrals described thus far are the $0$-forms in this cochain model, and we have seen that these cochains describe properties of individual points of $P\R^N$. 

Integration of the $1$-forms of $Chen(P\R^N)$ along paths in $P\R^N$, interpreted as parametrized families of time series, provides information about such a family. To draw an analogy, consider the case of differential forms on $\R^N$. The $0$-forms are simply functions, which provide information about individual points in $\R^N$, while integration of $1$-forms provide information about paths in $\R^N$. For example, integration of $\dd x_i$ along a path tells us the displacement in the $x_i$ coordinate. \medskip

The simplest example of a $1$-form in $Chen(P\R^N)$ is generated by a single $2$-form on $\R^N$. We follow the construction in Definition~\ref{def:iiform} to obtain our desired $1$-form. 

Suppose $\omega = \dd x_i \wedge \dd x_j$, and suppose $\overline{\alpha}: I \rightarrow P\R^N$ is a family of paths. Associated to such a family is the map $\alpha : I \times I \rightarrow \R^N$, defined by $\alpha(s,t) = \overline{\alpha}(s)(t)$. The pullback of $\omega$ with respect to $\alpha$ is
\begin{equation*}
    (\alpha)^*(\omega) = \left( \frac{\partial \alpha_i}{\partial s}\frac{\partial \alpha_j}{\partial t} - \frac{\partial \alpha_i}{\partial t} \frac{\partial \alpha_j}{\partial s}\right) \dd s \wedge \dd t.
\end{equation*}
The $1$-form in $Chen(P\R^N)$ with respect to $\omega$, viewed under the plot $\overline{\alpha}$ is defined to be
\begin{equation*}
    \left( \int \omega\right)_{\overline{\alpha}} = \left( \int_0^1 \frac{\partial \alpha_i}{\partial s}\frac{\partial \alpha_j}{\partial t} - \frac{\partial \alpha_i}{\partial t} \frac{\partial \alpha_j}{\partial s} \dd t \right) \dd s.
\end{equation*}
We can think of this expression as the pullback of the $1$-form $\int \omega$ along $\overline{\alpha}$. Thus, integrating over the family of paths corresponds to integrating over $s$, and we obtain
\begin{equation*}
    \int_{\overline{\alpha}} \left( \int \omega\right) = \int_0^1 \int_0^1 \frac{\partial \alpha_i}{\partial s}\frac{\partial \alpha_j}{\partial t} - \frac{\partial \alpha_i}{\partial t} \frac{\partial \alpha_j}{\partial s} \dd t \, \dd s.
\end{equation*}
Note that the integrand is the determinant of the Jacobian of $\alpha_{i,j} = (\alpha_i, \alpha_j)$. Therefore, integration of $\int \omega$ along a family of paths yields the area of the region $\alpha_{i,j}(I^2)$, as shown in the figure.

\begin{figure}[!htbp]
\centering
    \includegraphics[width=0.65\textwidth]{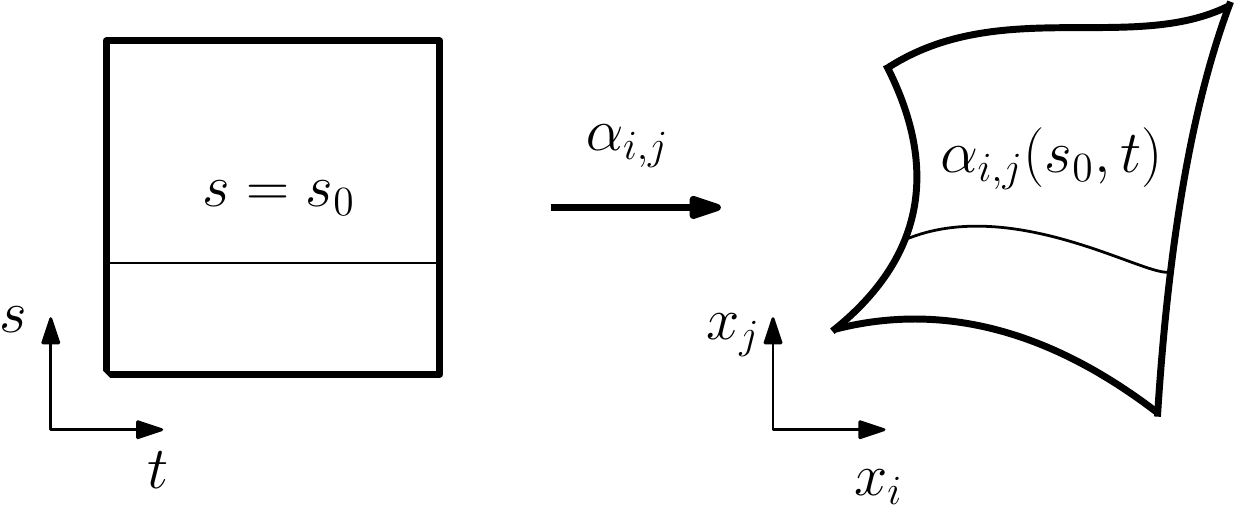}
\end{figure}

Although the information in $\int\omega$ may seem elementary, this example produces the simplest $1$-form by Chen's construction, analogous to the first level signature terms $S^i$. The idea is to mimic the construction of the path signature, and construct iterated integral forms out of a $2$-form $\omega \in A^2_{dR}(\R^N)$ and several $1$-forms $\omega_i \in A^1_{dR}(\R^N)$ in different permutations to acquire more sophisticated properties of these families of paths. In fact, by considering the $p$-forms in $Chen(P\R^N)$, we can study multiparameter families of paths in a similar manner. \medskip

The second direction is to consider spaces more general than $P\R^N$. Namely, we can think of the path space as the mapping space $P\R^N = Map(I, \R^N)$, and consider iterated integral cochain models for more general mapping spaces. In fact, Chen's definition of the path signature was not restricted to paths on $\R^N$, but rather paths on differentiable manifolds $M$. The definition of the path signature is the same as Definition~\ref{def:ii}, except we replace the standard $1$-forms with a collection of forms $\omega_1, \ldots, \omega_m \in A^1_{dR}(M)$. Most of the algebraic properties from Section~\ref{sec:iterated} still hold for path signatures in $PM$. In the following theorem $C^r$ denotes $r$-times continuously differentiable.

\begin{theorem}[\cite{chen_integration_1958}]
Let $M$ be a $C^r$ manifold with $r \geq 2$, and suppose $\omega_1, \ldots, \omega_m \in A^1_{dR}(M)$ such that they span the cotangent bundle $T^*M$ at every point. Then if $\Gamma_1, \Gamma_2 \in PM$ are irreducible piecewise-regular continuous paths such that $\Gamma_1(0) = \Gamma_2(0)$ and $S(\Gamma_1) = S(\Gamma_2)$, then $\Gamma_1$ is a reparametrization of $\Gamma_2$.
\end{theorem}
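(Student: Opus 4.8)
The plan is to reduce the statement to the Euclidean uniqueness theorem (the $\R^m$ version of the characterization theorem stated above) by \emph{developing} each path into $\R^m$ and then reconstructing the original path from its development via an ordinary differential equation. First I would introduce the development map: for $\Gamma \in PM$ set
\[
    \tilde{\Gamma}(t) = \left( \int_0^t \Gamma^*\omega_1, \ldots, \int_0^t \Gamma^*\omega_m \right) \in \R^m.
\]
Since $\Gamma^*\omega_i = \omega_i(\Gamma(t))(\Gamma'(t))\,\dd t = \tilde{\gamma}_i'(t)\,\dd t$, comparing with Equation~\ref{eq:ii_function} shows immediately that $S^I(\Gamma) = S^I(\tilde{\Gamma})$ for every multi-index $I$, where the left side uses the forms $\omega_i$ on $M$ and the right side uses the standard forms on $\R^m$. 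Thus $S(\Gamma_1) = S(\Gamma_2)$ forces $S(\tilde\Gamma_1) = S(\tilde\Gamma_2)$, and Theorem~\ref{thm:injective}, applied in $\R^m$, yields that $\tilde\Gamma_1$ and $\tilde\Gamma_2$ are tree-like equivalent.

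The role of the spanning hypothesis is to make the development invertible infinitesimally. At each point $p$ the linear anchor map $\rho_p : T_pM \to \R^m$, $v \mapsto (\omega_1(p)(v), \ldots, \omega_m(p)(v))$, is injective precisely because the $\omega_i(p)$ span $T_p^*M$; consequently $\tilde\Gamma'(t) = \rho_{\Gamma(t)}(\Gamma'(t))$ vanishes if and only if $\Gamma'(t)$ does, so regularity and hence piecewise-regularity are preserved, and $\Gamma$ may be recovered from $\tilde\Gamma$ as the unique solution of the ODE $\Gamma'(t) = \rho_{\Gamma(t)}^{+}(\tilde\Gamma'(t))$ with initial condition $\Gamma(0) = \Gamma_1(0) = \Gamma_2(0)$, where $\rho_p^{+}$ is a smooth left inverse (e.g.\ the pseudo-inverse induced by a Riemannian metric, which agrees with $\rho_p^{-1}$ on $\mathrm{im}(\rho_p)$). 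The hypothesis $r \geq 2$ is exactly what is needed: the $\omega_i$ are then $C^1$, so $p \mapsto \rho_p^{+}$ is $C^1$ and the Picard--Lindel\"of theorem applies on each regular piece, giving uniqueness of the reconstruction.

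Next I would transport the notion of reduction through the development. A direct computation shows development respects concatenation and inversion, $\widetilde{\Gamma * \Gamma'} = \tilde\Gamma * \tilde\Gamma'$ and $\widetilde{\Gamma^{-1}} = (\tilde\Gamma)^{-1}$ up to translation, so a reducible path in $M$ develops to a reducible path in $\R^m$. The converse is the crucial point and is where ODE uniqueness does its real work: if $\tilde\Gamma$ contains a backtracking segment $\tilde c * \tilde c^{-1}$, then the corresponding sub-path of $\Gamma$ and the reverse of its first half both solve the reconstruction ODE with development $\tilde c^{-1}$ and the same initial point, so they coincide up to reparametrization, exhibiting the backtrack already in $M$. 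Hence $\Gamma$ is irreducible if and only if $\tilde\Gamma$ is, and irreducible paths develop to irreducible paths.

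Finally I would assemble these pieces. Since $\Gamma_1, \Gamma_2$ are irreducible and piecewise-regular, so are $\tilde\Gamma_1, \tilde\Gamma_2$; tree-like equivalence of irreducible piecewise-regular paths (via the Lemma above together with the group structure of $\Sigma$ under concatenation) forces $\tilde\Gamma_1 = \tilde\Gamma_2$ up to reparametrization. Reparametrizing so that the developments agree and invoking uniqueness of the reconstruction ODE with the common initial point $\Gamma_1(0) = \Gamma_2(0)$ then yields $\Gamma_1 = \Gamma_2$ up to reparametrization. I expect the main obstacle to be the converse reduction correspondence in the third step --- making precise that a Euclidean backtrack lifts to a genuine retracing in $M$ --- since this is exactly where the global injectivity of the assignment $\Gamma \mapsto \tilde\Gamma$ (which can fail when the $\omega_i$ do not span) must be extracted from the purely infinitesimal injectivity of $\rho_p$.
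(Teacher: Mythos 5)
The paper does not actually prove this theorem --- it is stated as a citation to Chen's 1958 work --- so there is no internal proof to compare against. Your strategy of reducing to the Euclidean case via the development map $\Gamma \mapsto \tilde\Gamma$ is a legitimate and, in the context of this paper, natural route: the identity $S^I(\Gamma) = S^I(\tilde\Gamma)$ is immediate from $\dd\tilde\gamma_i = \Gamma^*\omega_i$, the spanning hypothesis dualizes exactly to injectivity of the anchor map $\rho_p$, and the reconstruction ODE correctly isolates where the hypotheses $r \geq 2$ and $\Gamma_1(0)=\Gamma_2(0)$ enter. Your identification of the backtrack-lifting step as the crux is accurate, and the argument you sketch for it (two solutions of the reconstruction ODE from $\Gamma(c)$ driven by the same development) is sound. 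What this approach buys is that all the hard analysis is outsourced to the Euclidean uniqueness theorem and the tree-like lemma already developed in the paper; Chen's original argument does not pass through tree-like equivalence, which postdates him.

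Two points need tightening before this is a complete proof. First, the reparametrization $\phi$ with $\tilde\Gamma_2 = \tilde\Gamma_1 \circ \phi$ that you extract from tree-like equivalence is a priori only continuous and strictly increasing, so $\Gamma_2 \circ \phi^{-1}$ need not be differentiable and cannot be fed directly into Picard--Lindel\"of; you should first put all paths (and sub-paths, in the backtrack-lifting step) into arc-length parametrization of their developments, which is available since the developments are piecewise regular, and run the ODE argument there. Second, the step ``tree-like equivalent irreducible piecewise-regular paths are reparametrizations of one another'' does not follow from the Lemma and the group structure of $\Sigma$ alone; it requires Chen's uniqueness of the irreducible reduction (a word-reduction argument showing that if $\alpha * \beta^{-1}$ reduces to a point with $\alpha, \beta$ irreducible, then the cancellation must propagate from the junction and consume both paths entirely). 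The paper quotes this result without proof, so you may cite it, but you should do so explicitly rather than leaving it implicit.
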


This is the analogous statement of Theorem~\ref{thm:injective} but for $PM$ rather than $P\R^N$. This theorem states that the path signature for manifolds is still a faithful representation of paths. Thus, it provides a complete reparametrization-invariant feature set for multivariate time series that naturally lie on a manifold. For example, time series of phases of a collection of oscillators, would be a path on a toroidal manifold. Another example may be time series of states of some dynamical system, which may be a trajectory on an invariant manifold. \medskip

We can generalize further and consider mapping spaces $Map(Y,M)$, where $Y$ is a topological space such that $\textrm{conn}(M) \geq \dim(Y)$. In this case, there exists a generalized iterated integral cochain model for the mapping space, which is developed in~\cite{patras_cochain_2003, ginot_chen_2010}. This setting would allow for the study of data which is naturally modeled by elements of such mapping spaces. Possible examples include vector fields over an embedded manifold $M \subset \R^N$, which can be modelled by the mapping space $Map(M, \R^N)$.


\section*{Acknowledgements}
D.L. is supported by the Office of the Assistant Secretary of Defense Research \& Engineering through ONR N00014-16-1-2010, and the Natural Sciences and Engineering Research Council of Canada (NSERC) PGS-D3.

\appendix
\section{Path Space Cochains}
\label{sec:pathspace}

Chen's formulation of a cochain model begins by defining a de Rham-type cochain complex $A_{dR}$ on a general class of spaces called \textit{differentiable spaces}, generalizing the usual differential forms defined on manifolds. Path spaces are examples of differentiable spaces, and thus are associated with such a de Rham cochain complex. By defining iterated integrals using higher-degree forms on $\R^N$, rather than the $1$-forms used in Definition~\ref{def:ii}, we obtain forms on $P\R^N$ rather than functions. Finally, he shows that the forms generated by iterated integrals form a subcomplex of $A_{dR}$, and is in fact quasi-isomorphic to $A_{dR}$. A detailed account of this construction is found in~\cite{chen_iterated_1977}, and a more modern treatment can be found in~\cite{felix_algebraic_2008}.

Smooth structures are defined on manifolds by using charts to exploit the well-defined notion of smoothness on Euclidean space. Charts can be viewed as probes into the local structure of a manifold. However, as homeomorphisms of some Euclidean space of fixed dimension, charts are a rather rigid way to view local structure as they are both maps into and out of a manifold. Differentiable spaces relax this homeomorphism condition, and only require its \textit{plots}, the differentiable space analog of a chart, to map \textit{into} the space. Baez and Hoffnung~\cite{baez_convenient_2011} further discuss these ideas, along with categorical properties of differentiable spaces.
\begin{definition}
    A \textit{differentiable space} is a set $X$ equipped with, for every Euclidean convex set $C \subseteq \R^n$ with nonempty interior and for any dimension $n$, a collection of functions $\phi: C \rightarrow X$ called \textit{plots}, satisfying the following:
    \begin{enumerate}
        \item (Closure under pullback) If $\phi$ is a plot and $f:C' \rightarrow C$ is a smooth, then $\phi f$ is a plot.
        \item (Open cover condition) Suppose the collection of convex sets $\{C_j\}$ form an open cover of the convex set $C$, with inclusions $i_j: C_j \xhookrightarrow{} C$. If $\phi i_j$ is a plot for all $j$, then $\phi$ is a plot.
        \item (Constant plots) Every map $f:\R^0 \rightarrow X$ is a plot.
    \end{enumerate}
\end{definition}

It is clear that any manifold is a differentiable space by taking all smooth maps $\phi: C \rightarrow M$ to be plots. We obtain a canoncial differentiable space structure on $PM$ by noting that, given any map $\overline{\alpha}: C \rightarrow PM$, there is an associated \textit{adjoint map} $\alpha: I \times C \rightarrow M$ defined by $\alpha(t, x) = \overline{\alpha}(x)(t)$. Consider the collection of all maps $\overline{\alpha}: C \rightarrow M$ for which the adjoint $\alpha$ is a smooth map, which clearly satisfies the first and third conditions. To obtain a collection of plots on $PM$, we additionally include all maps $\overline{\alpha}: C \rightarrow PM$ such that the hypothesis of the second condition is true. 

\begin{definition}
    A \textit{$p$-form} $\omega$ on a differentiable space $X$ is an assignment of a $p$-form $\omega_\phi$ on $C$ to each plot $\phi: C \rightarrow X$ such that if $ f: C' \rightarrow C$ is smooth, then $\omega_{\phi f} = f^* \omega_{\phi}$. The collection of $p$-forms on $X$ is denoted $A^p_{dR}(X)$, and the graded collection of all forms on $X$ is $A_{dR}(X)$.
\end{definition}

Linearity, the wedge product, and the exterior derivative are all defined plot-wise. Namely, given $\omega, \omega_1, \omega_2 \in A_{dR}(X)$, $\lambda\in \R$, and any plot $\phi: C \rightarrow X$, 
\begin{itemize}
    \item $(\omega_1 + \lambda \omega_2)_\phi = (\omega_1)_\phi + \lambda(\omega_2)_\phi$,
    \item $(\omega_1 \wedge \omega_2)_\phi = (\omega_1)_\phi \wedge (\omega_2)_\phi$, and
    \item $(d\omega)_\phi = d\omega_\phi$.
\end{itemize}

Therefore, $A_{dR}(X)$ has the structure of a commutative differential graded algebra, and we may define the de Rham cohomology
\begin{align*}
    H^*_{dR}(X) \coloneqq H^*(A_{dR}(X))
\end{align*}
of differentiable spaces.

From here forward, we will focus on the case of forms on $P\R^N$, for which there is a special, easily understood class of forms defined using iterated integrals. Much of what we explicitly construct can be lifted to paths in manifolds of interest,or to more general mapping spaces, and will be discussed in forthcoming work by the authors.

\begin{definition}
\label{def:iiform}
    Let $\omega_1, \ldots, \omega_k$ be forms on $\R^N$ with $\omega_i \in A^{q_i}_{dR}(\R^N)$. The iterated integral $\int \omega_1 \ldots \omega_k$ is a $\left( (q_1 + \ldots + q_k) - k \right)$-form on $P\R^N$ defined as follows.
    Let $\overline{\alpha}:C \rightarrow P\R^N$ be a plot with adjoint $\alpha:C \times I \rightarrow \R^N$. Decompose the pullback of $\omega_i$ along $\alpha$ on $C \times I$ as
    \begin{equation*}
        \alpha^*(\omega_i)(x,t) = \dd t \wedge \omega_i'(x,t) + \omega_i''(x,t)
    \end{equation*}
    where $\omega_i', \omega_i''$ are $q_i$-forms on $C \times I$ without a $\dd t$ term. Then, the iterated integral is defined as
    \begin{equation*}
        \left(\int \omega_1 \ldots \omega_k\right)_{\overline{\alpha}} = \int_{\Delta^k} \omega_1'(x,t_1) \wedge \ldots \wedge \omega_k'(x,t_k) \, \dd t_1 \ldots \dd t_k.
    \end{equation*}
\end{definition}

Consider the conceptual similarities between this definition, and the one given in Definition~\ref{def:ii}. In the language of our present formulation, $S^I(\Gamma)$, as given in Equation~\ref{eq:ii_function}, is the iterated integral where $\omega_l = \dd x_{i_l}$ viewed through the one-point plot $\overline{\alpha}_\Gamma: \{*\} \rightarrow P\R^N$ defined by $\overline{\alpha}_\Gamma(*) = \Gamma$.

\begin{definition}
    Let $Chen(P\R^N)$ be the sub-vector space of forms on $P\R^N$ generated by
    \begin{align*}
        \pi_0^*(\omega_0) \wedge \int \omega_1 \ldots \omega_k \wedge \pi_1^*(\omega_{k+1})
    \end{align*}
    where
    \begin{itemize}
        \item $\omega_i \in A_{dR}(\R^N)$, for $i = 0, \ldots, k+1$,
        \item $\int \omega_1 \ldots \omega_k$ is the iterated integral in the previous definition, and
        \item $\pi_0, \pi_1: P\R^N \rightarrow \R^N$ are the evaluation maps at $0$ and $1$ respectively.
    \end{itemize}
\end{definition}

\begin{theorem}[\cite{chen_iterated_1977}]
\label{thm:dgsa}
    The complex $Chen(P\R^N)$ is a differential graded subalgebra of $A_{dR}(P\R^N)$.
\end{theorem}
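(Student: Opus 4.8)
The plan is to establish the two closure properties that define a differential graded subalgebra: closure under the wedge product and closure under the exterior derivative $d$. That $Chen(P\R^N)$ is a graded sub-vector space containing the unit is immediate from the definition, since it is spanned by the homogeneous generators $\pi_0^*(\omega_0) \wedge \int \omega_1 \ldots \omega_k \wedge \pi_1^*(\omega_{k+1})$, the unit $1$ arising from the case $k = 0$ with $\omega_0 = \omega_1 = 1$. It therefore remains to check that products and exterior derivatives of generators land back in $Chen(P\R^N)$; the remaining differential-graded-algebra axioms are inherited from $A_{dR}(P\R^N)$.

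For closure under the product, I would multiply two generators and use the graded-commutativity of $A_{dR}(P\R^N)$ to reorder the factors, incurring Koszul signs, so that the two $\pi_0^*$-terms are adjacent, as are the two $\pi_1^*$-terms and the two iterated-integral factors. Since $\pi_0^*$ and $\pi_1^*$ are pullbacks, hence algebra homomorphisms, $\pi_0^*(\omega_0) \wedge \pi_0^*(\eta_0) = \pi_0^*(\omega_0 \wedge \eta_0)$, and similarly at the endpoint $1$, so the endpoint factors retain the required form. The essential step is a product formula for iterated integrals generalizing Theorem~\ref{thm:shuffle}: writing $(\zeta_1, \ldots, \zeta_{k+l}) = (\omega_1, \ldots, \omega_k, \eta_1, \ldots, \eta_l)$,
\[
    \int \omega_1 \ldots \omega_k \wedge \int \eta_1 \ldots \eta_l = \sum_{\sigma \in Sh(k,l)} \pm \int \zeta_{\sigma(1)} \ldots \zeta_{\sigma(k+l)},
\]
where the signs are those dictated by the Koszul convention. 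Exactly as in the proof of Theorem~\ref{thm:shuffle}, this comes from the decomposition of $\Delta^k \times \Delta^l$ into top-dimensional simplices indexed by shuffles, applied now to the $\dd t$-free components $\omega_i'$, $\eta_j'$ of Definition~\ref{def:iiform}; the signs are the only genuinely new feature beyond the $0$-form case. The result is a linear combination of generators, hence lies in $Chen(P\R^N)$.

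For closure under $d$, I would apply the graded Leibniz rule to a generator. Because $d$ commutes with pullback, $d\,\pi_0^*(\omega_0) = \pi_0^*(d\omega_0)$ and $d\,\pi_1^*(\omega_{k+1}) = \pi_1^*(d\omega_{k+1})$ are again admissible endpoint factors, so it suffices to compute $d \int \omega_1 \ldots \omega_k$. The needed identity is Chen's formula
\[
    d \int \omega_1 \ldots \omega_k = \sum_{i} \pm \int \omega_1 \ldots d\omega_i \ldots \omega_k + \sum_{i} \pm \int \omega_1 \ldots (\omega_i \wedge \omega_{i+1}) \ldots \omega_k + (\text{boundary terms}),
\]
in which the boundary terms arise from the faces $t_1 = 0$ and $t_k = 1$ of $\Delta^k$ and have the shapes $\pi_0^*(\omega_1) \wedge \int \omega_2 \ldots \omega_k$ and $\int \omega_1 \ldots \omega_{k-1} \wedge \pi_1^*(\omega_k)$. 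Every summand on the right is again one of the generators of $Chen(P\R^N)$ — this is precisely why the $\pi_0^*$ and $\pi_1^*$ factors were built into the generating set — so $d$ preserves $Chen(P\R^N)$.

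The \emph{main obstacle} is proving this exterior-derivative identity with its correct signs and boundary contributions. Concretely, I would pull $\int \omega_1 \ldots \omega_k$ back along an arbitrary plot $\overline{\alpha}: C \to P\R^N$ with adjoint $\alpha: C \times I \to \R^N$, decompose each $\alpha^*\omega_i$ into its $\dd t$ and $\dd t$-free parts as in Definition~\ref{def:iiform}, and apply Stokes' theorem on the simplex $\Delta^k$ fibered over the parameter space $C$. Differentiation in the $C$-directions yields the $d\omega_i$ terms; the interior faces $t_i = t_{i+1}$ of $\partial\Delta^k$ yield the adjacent-wedge terms $\omega_i \wedge \omega_{i+1}$; and the two extreme faces $t_1 = 0$ and $t_k = 1$ yield the endpoint evaluations $\pi_0^*$ and $\pi_1^*$. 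Carrying the Koszul signs correctly through this Stokes computation, and verifying that the orientation conventions make the interior-face and differentiation contributions combine into the stated formula, is the delicate bookkeeping that constitutes the real content of the proof; the two closure statements above are then formal consequences.
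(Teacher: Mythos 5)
Your outline matches the approach taken in the paper (and in Chen's original argument): closure under the wedge product via the signed shuffle decomposition of $\Delta^k\times\Delta^l$, which is exactly Lemma~\ref{lem:shuffle}, and closure under $d$ via Stokes' theorem on the simplex, with the interior faces producing the $\omega_i\wedge\omega_{i+1}$ terms and the extreme faces producing the $\pi_0^*$ and $\pi_1^*$ factors that the generating set was designed to accommodate. The paper itself defers the sign and boundary bookkeeping to \cite{chen_iterated_1977}, so your proposal is, if anything, slightly more explicit about where the real work lies.
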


This theorem is proved by showing that the $Chen(P\R^N)$ is closed under the differential and the wedge product. As we will not make use of the details, we refer the reader to~\cite{chen_iterated_1977} for further discussion of the differential, noting only that the additional forms $\pi_0^*(\omega_0)$ and $\pi_{1}^*(\omega_{k+1})$ are required for closure. The wedge product structure is analogous to the shuffle product identity in Theorem~\ref{thm:shuffle}, and is proved in a similar manner. Note that the wedge product structure for $0$-cochains is exactly Theorem~\ref{thm:shuffle}.

Given $m$ forms $\omega_i \in A^{q_i}_{dR}(\R^N)$ and $\sigma$ a permutation of the set $[m]$, we denote by $\epsilon_{\sigma, (q_i)} \in \{-1, 1\}$ the sign such that
\begin{align*}
    \omega_1 \wedge \ldots \wedge \omega_m = \epsilon_{\sigma, (q_i)} \left(\omega_{\sigma(1)} \wedge \ldots \wedge \omega_{\sigma(m)}\right).
\end{align*}
As the notation suggests,  $\epsilon_{\sigma, (q_i)}$ depends on both the permutation and the ordered list of the degrees $(q_i)$. 

\begin{lemma}
\label{lem:shuffle}
    Let $\omega_i \in A^{q_i}_{dR}(R^N)$ for $i = 1, \ldots, k+l$. We have the following product formula:
    \begin{align}
        \int \omega_1 \ldots \omega_k \wedge \int \omega_{k+1} \ldots \omega_{k+l} = \sum_{\sigma \in Sh(k,l)} \epsilon_{\sigma,(q_i)} \int \omega_{\sigma(1)} \omega_{\sigma(2)} \ldots \omega_{\sigma(k+l)}.
    \end{align}
\end{lemma}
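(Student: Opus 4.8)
The plan is to reduce the identity to a plot-wise computation and then run exactly the simplicial decomposition argument used for Theorem~\ref{thm:shuffle}, paying careful attention to the Koszul signs introduced by commuting graded forms. Since an equality of forms on the differentiable space $P\R^N$ holds precisely when it holds for the pullback along every plot, I would fix an arbitrary plot $\overline{\alpha}: C \to P\R^N$ with adjoint $\alpha: C \times I \to \R^N$ and verify the identity for $(\cdot)_{\overline{\alpha}}$. Following Definition~\ref{def:iiform}, I decompose $\alpha^*(\omega_i) = \dd t \wedge \omega_i' + \omega_i''$, so that each factor on the left-hand side is the fibre integral
\begin{equation*}
    \left(\int \omega_1 \ldots \omega_k\right)_{\overline{\alpha}} = \int_{\Delta^k} \omega_1'(x,t_1) \wedge \ldots \wedge \omega_k'(x,t_k)\, \dd t_1 \ldots \dd t_k,
\end{equation*}
and similarly for the second factor over $\Delta^l$. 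Because the wedge product on $P\R^N$ is defined plot-wise, the left-hand side becomes the wedge of these two fibre integrals.

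Next I would merge the two integrals. Each fibre integral is a form on $C$ obtained by integrating a $C$-form-valued function over a simplex, so Fubini lets me write the wedge as a single integral of $\omega_1'(t_1) \wedge \ldots \wedge \omega_k'(t_k) \wedge \omega_{k+1}'(s_1) \wedge \ldots \wedge \omega_{k+l}'(s_l)$ over $\Delta^k \times \Delta^l$. At this point I invoke the same decomposition of $\Delta^k \times \Delta^l$ into $(k+l)$-simplices indexed by $Sh(k,l)$ that appears in the proof of Theorem~\ref{thm:shuffle}. On the chamber associated to a shuffle $\sigma$ I relabel the combined time variables by ordered variables $0 \le \tau_1 \le \ldots \le \tau_{k+l} \le 1$, under which the chamber integral is, up to reordering the wedge factors, exactly $\left(\int \omega_{\sigma(1)} \ldots \omega_{\sigma(k+l)}\right)_{\overline{\alpha}}$.

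The heart of the argument, and the step I expect to be the main obstacle, is the sign. Reordering $\omega_1' \wedge \ldots \wedge \omega_{k+l}'$ into the $\sigma$-permuted order produces a Koszul sign, and the decisive point is that this sign is governed not by the degrees $q_i$ of the original forms but by the degrees $q_i - 1$ of the $\dd t$-free components $\omega_i'$; the integration measure $\dd \tau_1 \ldots \dd \tau_{k+l}$ is symmetric and contributes nothing. By the definition of the Koszul sign, the reordering by $\sigma$ on forms of degree $q_i - 1$ therefore yields $\epsilon_{\sigma,(q_i-1)}$. A clean way to pin this down without drowning in indices is to check the generating case $k=l=1$: splitting $[0,1]^2$ into $\{t<s\}$ and $\{t>s\}$ gives precisely $\int\omega_1\omega_2 + (-1)^{(q_1-1)(q_2-1)}\int\omega_2\omega_1$, and the general case follows by multiplicativity of Koszul signs over adjacent transpositions. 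As a consistency check, when every $q_i = 1$ (the $0$-cochain case) all reduced degrees vanish, the sign is $+1$, and the formula collapses to the sign-free shuffle identity of Theorem~\ref{thm:shuffle}; this forces the relevant degrees to be the reduced ones and fixes the convention.
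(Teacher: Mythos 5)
Your argument is correct and follows the route the paper itself intends: the paper offers no separate proof of Lemma~\ref{lem:shuffle}, remarking only that the wedge structure ``is proved in a similar manner'' to Theorem~\ref{thm:shuffle}, i.e.\ plot-wise reduction, Fubini, and the decomposition of $\Delta^k\times\Delta^l$ into $(k+l)$-simplices indexed by $Sh(k,l)$ --- exactly your steps. The one substantive point is the sign, and there you are right and the statement as printed is not: reordering the integrands $\omega_i'$, which have degree $q_i-1$, produces the Koszul sign $\epsilon_{\sigma,(q_i-1)}$ attached to the shifted degrees, not $\epsilon_{\sigma,(q_i)}$ as the paper defines it via the unshifted degrees. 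Your consistency check settles the convention: for $q_i=1$ the lemma must collapse to the sign-free identity of Theorem~\ref{thm:shuffle}, yet $\epsilon_{\sigma,(1,\dots,1)}=\mathrm{sgn}(\sigma)$ while $\epsilon_{\sigma,(0,\dots,0)}=1$; likewise your $k=l=1$ computation gives $\int\omega_1\wedge\int\omega_2=\int\omega_1\omega_2+(-1)^{(q_1-1)(q_2-1)}\int\omega_2\omega_1$, matching Chen's original formula. So either the displayed sign should read $\epsilon_{\sigma,(q_i-1)}$, or $\epsilon$ must be understood as the Koszul sign governed by the forms $\omega_i'$ (equivalently by the degrees of the $\int\omega_i$) rather than by the $\omega_i$ themselves; with that correction your proof is complete and is the intended one.
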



Theorem~\ref{thm:dgsa} and the following theorem show that the subcomplex of iterated integrals $Chen(P\R^N)$ is a cochain model for $P\R^N$.

\begin{theorem}
    The two commutative differential graded algebras, $A_{dR}(P\R^N)$ and $Chen(P\R^N)$, have the same minimal model as $\R^N$.
\end{theorem}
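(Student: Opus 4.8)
The plan is to use the fact that the minimal model of a commutative differential graded algebra (CDGA) is an invariant of its quasi-isomorphism type: CDGAs linked by a zigzag of quasi-isomorphisms have isomorphic minimal models, and minimal Sullivan models are unique up to isomorphism once they exist (here all three algebras are cohomologically connected, with $H^0 = \R$ since the underlying spaces are connected, so existence and uniqueness are not an issue). It therefore suffices to connect the three algebras $A_{dR}(\R^N)$, $A_{dR}(P\R^N)$, and $Chen(P\R^N)$ by quasi-isomorphisms. I would do this with two links: the inclusion $Chen(P\R^N) \hookrightarrow A_{dR}(P\R^N)$, and the pullback $\pi_0^* : A_{dR}(\R^N) \to A_{dR}(P\R^N)$ along evaluation at time $0$. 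The phrase ``minimal model of $\R^N$'' is read as the minimal model of the de Rham CDGA $A_{dR}(\R^N)$.

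For the first link, Theorem~\ref{thm:dgsa} already tells us that $Chen(P\R^N)$ is a differential graded subalgebra of $A_{dR}(P\R^N)$, so the inclusion is a map of CDGAs. The key input I would invoke is Chen's theorem (see~\cite{chen_iterated_1977}, and the discussion opening this appendix) that this inclusion is in fact a \emph{quasi-isomorphism}, inducing an isomorphism on de Rham cohomology. Granting this, $Chen(P\R^N)$ and $A_{dR}(P\R^N)$ have isomorphic minimal models.

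For the second link, the evaluation $\pi_0 : P\R^N \to \R^N$ and the constant-path inclusion $c : \R^N \to P\R^N$, $c(x)(t) = x$, satisfy $\pi_0 \circ c = \mathrm{id}$, while the smooth reparametrization homotopy $H(s)(\gamma)(t) = \gamma((1-s)t)$ shows $c \circ \pi_0 \simeq \mathrm{id}$; hence $P\R^N$ and $\R^N$ are homotopy equivalent as differentiable spaces. By homotopy invariance of the de Rham complex in Chen's framework, $\pi_0^*$ is then a quasi-isomorphism $A_{dR}(\R^N) \to A_{dR}(P\R^N)$, so these two algebras also share a minimal model. Chaining the two links and using uniqueness of minimal models, all three CDGAs possess the same minimal model, namely that of $\R^N$; since $\R^N$ is contractible this common model is the trivial algebra $(\R, 0)$, so the cohomological content is vacuous and the substantive assertion is really the identification of the small Chen complex with the full de Rham complex of the path space.

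The main obstacle is the first link: Chen's quasi-isomorphism is the deep ingredient, proved by comparing $Chen(P\R^N)$ with the (reduced) bar construction on $A_{dR}(\R^N)$ and running the iterated-integral / Eilenberg--Moore spectral sequence against the path fibration; I would take it as given from~\cite{chen_iterated_1977}. A secondary point requiring care is that homotopy invariance of $A_{dR}$ genuinely holds for differentiable spaces and that the contraction $H$ above is admissible plotwise, both of which are guaranteed by the de Rham theory Chen develops in this setting.
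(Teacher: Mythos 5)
Your proposal is correct; the paper states this theorem without proof, deferring to Chen~\cite{chen_iterated_1977}, and your two-link reduction --- the inclusion $Chen(P\R^N)\hookrightarrow A_{dR}(P\R^N)$ being a quasi-isomorphism by Chen's bar-construction argument, plus the contraction of $P\R^N$ onto constant paths making $\pi_0^*$ a quasi-isomorphism --- is exactly the standard route, with the deep ingredient correctly isolated and attributed. Your closing observation that contractibility of $\R^N$ makes the minimal model trivial, so the real content is the quasi-isomorphism of the Chen subcomplex with the full de Rham complex, is also accurate.
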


Returning our focus to iterated integrals as functions, we see that the $S^I$ are $0$-cochains in this model, constructed via pullback and integration. Indeed, consider the evaluation map $\ev_k : \Delta^k \times P\R^N \rightarrow (\R^N)^k$ defined by
\begin{align*}
    \ev_k((t_1, \ldots, t_k), \Gamma) \coloneqq \left(\Gamma(t_1), \ldots, \Gamma(t_k)\right).
\end{align*}
Then, $S^I$ is the image of $\otimes_{l=1}^k \dd x_{i_l}$ under the composition
\begin{align*}
    A^1_{dR}(\R^n)^{\otimes k} \xrightarrow{\ev_k^*} A^k_{dR}(\Delta^k \times P\R^N) \xrightarrow{\int_{\Delta^k}} Chen^0(P\R^N).
\end{align*}

\bibliographystyle{amsplain}
\bibliography{abel.bib}

\providecommand{\bysame}{\leavevmode\hbox to3em{\hrulefill}\thinspace}
\providecommand{\MR}{\relax\ifhmode\unskip\space\fi MR }
\providecommand{\MRhref}[2]{%
  \href{http://www.ams.org/mathscinet-getitem?mr=#1}{#2}
}
\providecommand{\href}[2]{#2}
\begin{thebibliography}{10}

\bibitem{amendola_varieties_2018}
Carlos Am\'endola, Peter Friz, and Bernd Sturmfels, \emph{Varieties of
  {Signature} {Tensors}}, arXiv:1804.08325 [math] (2018), arXiv: 1804.08325.

\bibitem{arribas_signature-based_2017}
Imanol~Perez Arribas, Kate Saunders, Guy Goodwin, and Terry Lyons, \emph{A
  signature-based machine learning model for bipolar disorder and borderline
  personality disorder}, arXiv:1707.07124 [stat] (2017), arXiv: 1707.07124.

\bibitem{baez_convenient_2011}
John~C. Baez and Alexander~E. Hoffnung, \emph{Convenient categories of smooth
  spaces}, Transactions of the American Mathematical Society \textbf{363}
  (2011), no.~11, 5789--5825.

\bibitem{baryshnikov_cyclicity_2016}
Y.~Baryshnikov and E.~Schlafly, \emph{Cyclicity in multivariate time series and
  applications to functional {MRI} data}, 2016 {IEEE} 55th {Conference} on
  {Decision} and {Control} ({CDC}), December 2016, pp.~1625--1630.

\bibitem{boedihardjo_uniqueness_2014}
Horatio Boedihardjo, Hao Ni, and Zhongmin Qian, \emph{Uniqueness of signature
  for simple curves}, Journal of Functional Analysis \textbf{267} (2014),
  no.~6, 1778--1806.

\bibitem{bressler_wienergranger_2011}
Steven~L. Bressler and Anil~K. Seth, \emph{Wiener–{Granger} {Causality}: {A}
  well established methodology}, NeuroImage \textbf{58} (2011), no.~2,
  323--329.

\bibitem{chen_vector_2011}
Gang Chen, Daniel~R. Glen, Ziad~S. Saad, J.~Paul Hamilton, Moriah~E. Thomason,
  Ian~H. Gotlib, and Robert~W. Cox, \emph{Vector {Autoregression}, {Structural}
  {Equation} {Modeling}, and {Their} {Synthesis} in {Neuroimaging} {Data}
  {Analysis}}, Computers in biology and medicine \textbf{41} (2011), no.~12,
  1142--1155.

\bibitem{chen_iterated_1954}
Kuo-Tsai Chen, \emph{Iterated {Integrals} and {Exponential}
  {Homomorphisms}†}, Proceedings of the London Mathematical Society
  \textbf{s3-4} (1954), no.~1, 502--512.

\bibitem{chen_integration_1957}
\bysame, \emph{Integration of {Paths}, {Geometric} {Invariants} and a
  {Generalized} {Baker}- {Hausdorff} {Formula}}, Annals of Mathematics
  \textbf{65} (1957), no.~1, 163--178.

\bibitem{chen_integration_1958}
\bysame, \emph{Integration of {Paths}--{A} {Faithful} {Representation} of
  {Paths} by {Noncommutative} {Formal} {Power} {Series}}, Transactions of the
  American Mathematical Society \textbf{89} (1958), no.~2, 395--407.

\bibitem{chen_iterated_1977}
\bysame, \emph{Iterated path integrals}, Bulletin of the American Mathematical
  Society \textbf{83} (1977), no.~5, 831--879 (EN). \MR{MR0454968}

\bibitem{chevyrev_primer_2016}
Ilya Chevyrev and Andrey Kormilitzin, \emph{A {Primer} on the {Signature}
  {Method} in {Machine} {Learning}}, arXiv:1603.03788 [cs, stat] (2016), arXiv:
  1603.03788.

\bibitem{chevyrev_persistence_2018}
Ilya Chevyrev, Vidit Nanda, and Harald Oberhauser, \emph{Persistence paths and
  signature features in topological data analysis}, arXiv:1806.00381 [cs, math,
  stat] (2018), arXiv: 1806.00381.

\bibitem{chevyrev_signature_2018}
Ilya Chevyrev and Harald Oberhauser, \emph{Signature moments to characterize
  laws of stochastic processes}, arXiv:1810.10971 [math, stat] (2018), arXiv:
  1810.10971.

\bibitem{cohen-steiner_vines_2006}
David Cohen-Steiner, Herbert Edelsbrunner, and Dmitriy Morozov, \emph{Vines and
  {Vineyards} by {Updating} {Persistence} in {Linear} {Time}}, Proceedings of
  the {Twenty}-second {Annual} {Symposium} on {Computational} {Geometry} (New
  York, NY, USA), {SCG} '06, ACM, 2006, pp.~119--126.

\bibitem{cummins_efficacy_2015}
B.~Cummins, T.~Gedeon, and K.~Spendlove, \emph{On the {Efficacy} of {State}
  {Space} {Reconstruction} {Methods} in {Determining} {Causality}}, SIAM
  Journal on Applied Dynamical Systems \textbf{14} (2015), no.~1, 335--381.

\bibitem{felix_algebraic_2008}
Yves F\'elix, John Oprea, and Daniel Tanr\'e, \emph{Algebraic {Models} in
  {Geometry}}, Oxford University Press, 2008 (en), Google-Books-ID:
  ENUTDAAAQBAJ.

\bibitem{friz_multidimensional_2010}
Peter~K. Friz and Nicolas~B. Victoir, \emph{Multidimensional {Stochastic}
  {Processes} as {Rough} {Paths}: {Theory} and {Applications}}, Cambridge
  {Studies} in {Advanced} {Mathematics}, Cambridge University Press, 2010.

\bibitem{ghrist_barcodes_2008}
Robert Ghrist, \emph{Barcodes: the persistent topology of data}, Bulletin of
  the American Mathematical Society \textbf{45} (2008), no.~1, 61--75.

\bibitem{ginot_chen_2010}
Gr\'egory Ginot, Thomas Tradler, and Mahmoud Zeinalian, \emph{A chen model for
  mapping spaces and the surface product}, Annales scientifiques de l'\'Ecole
  Normale Sup\'erieure \textbf{Ser. 4, 43} (2010), no.~5, 811--881 (en).
  \MR{2721877}

\bibitem{granger_investigating_1969}
C.~W.~J. Granger, \emph{Investigating {Causal} {Relations} by {Econometric}
  {Models} and {Cross}-spectral {Methods}}, Econometrica \textbf{37} (1969),
  no.~3, 424--438.

\bibitem{gyurko_extracting_2013}
Lajos~Gergely Gyurk\'o, Terry Lyons, Mark Kontkowski, and Jonathan Field,
  \emph{Extracting information from the signature of a financial data stream},
  arXiv:1307.7244 [q-fin] (2013), arXiv: 1307.7244.

\bibitem{haavelmo_statistical_1943}
Trygve Haavelmo, \emph{The {Statistical} {Implications} of a {System} of
  {Simultaneous} {Equations}}, Econometrica \textbf{11} (1943), no.~1, 1--12.

\bibitem{hambly_uniqueness_2010}
Ben Hambly and Terry Lyons, \emph{Uniqueness for the signature of a path of
  bounded variation and the reduced path group}, Annals of Mathematics
  \textbf{171} (2010), no.~1, 109--167.

\bibitem{kim_unified_2007}
Jieun Kim, Wei Zhu, Linda Chang, Peter~M. Bentler, and Thomas Ernst,
  \emph{Unified structural equation modeling approach for the analysis of
  multisubject, multivariate functional {MRI} data}, Human Brain Mapping
  \textbf{28} (2007), no.~2, 85--93 (en).

\bibitem{lyons_rough_2014}
Terry Lyons, \emph{Rough paths, {Signatures} and the modelling of functions on
  streams}, arXiv:1405.4537 [math, q-fin, stat] (2014), arXiv: 1405.4537.

\bibitem{lyons_differential_2007}
Terry~J. Lyons, Michael~J. Caruana, and Thierry L\'evy, \emph{Differential
  {Equations} {Driven} by {Rough} {Paths}: {Ecole} d’{Eté} de
  {Probabilités} de {Saint}-{Flour} {XXXIV}-2004}, École d'Été de
  {Probabilités} de {Saint}-{Flour}, Springer-Verlag, Berlin Heidelberg, 2007
  (en).

\bibitem{lyons_hyperbolic_2017}
Terry~J. Lyons and Weijun Xu, \emph{Hyperbolic development and inversion of
  signature}, Journal of Functional Analysis \textbf{272} (2017), no.~7,
  2933--2955.

\bibitem{lyons_inverting_2018}
\bysame, \emph{Inverting the signature of a path}, Journal of the European
  Mathematical Society \textbf{20} (2018), no.~7, 1655--1687.

\bibitem{moore_using_2018}
P.~J. Moore, J.~Gallacher, and T.~J. Lyons, \emph{Using path signatures to
  predict a diagnosis of {Alzheimer}'s disease}, arXiv:1808.05865 [q-bio, stat]
  (2018), arXiv: 1808.05865.

\bibitem{munch_probabilistic_2015}
Elizabeth Munch, Katharine Turner, Paul Bendich, Sayan Mukherjee, Jonathan
  Mattingly, and John Harer, \emph{Probabilistic {Fréchet} means for time
  varying persistence diagrams}, Electronic Journal of Statistics \textbf{9}
  (2015), no.~1, 1173--1204 (EN). \MR{MR3354335}

\bibitem{patras_cochain_2003}
FrÃ©dÃ©ric Patras and Jean-Claude Thomas, \emph{Cochain algebras of mapping
  spaces and finite group actions}, Topology and its Applications \textbf{128}
  (2003), no.~2, 189 -- 207.

\bibitem{pearl_causality:_2009}
Judea Pearl, \emph{Causality: {Models}, {Reasoning} and {Inference}}, 2nd ed.,
  Cambridge University Press, New York, NY, USA, 2009.

\bibitem{perea_sliding_2015}
Jose~A. Perea and John Harer, \emph{Sliding {Windows} and {Persistence}: {An}
  {Application} of {Topological} {Methods} to {Signal} {Analysis}}, Foundations
  of Computational Mathematics \textbf{15} (2015), no.~3, 799--838.

\bibitem{reizenstein_calculation_2017}
Jeremy Reizenstein, \emph{Calculation of {Iterated}-{Integral} {Signatures} and
  {Log} {Signatures}}, arXiv:1712.02757 [math] (2017), arXiv: 1712.02757.

\bibitem{reizenstein_iisignature_2018}
Jeremy Reizenstein and Benjamin Graham, \emph{The iisignature library:
  efficient calculation of iterated-integral signatures and log signatures},
  arXiv:1802.08252 [cs, math] (2018), arXiv: 1802.08252.

\bibitem{reutenauer_free_1993}
Christophe Reutenauer, \emph{Free {Lie} {Algebras}}, London {Mathematical}
  {Society} {Monographs}, Oxford University Press, Oxford, New York, June 1993.

\bibitem{sugihara_detecting_2012}
George Sugihara, Robert May, Hao Ye, Chih-hao Hsieh, Ethan Deyle, Michael
  Fogarty, and Stephan Munch, \emph{Detecting {Causality} in {Complex}
  {Ecosystems}}, Science \textbf{338} (2012), no.~6106, 496--500 (en).

\bibitem{takens_detecting_1981}
Floris Takens, \emph{Detecting strange attractors in turbulence}, Dynamical
  {Systems} and {Turbulence}, {Warwick} 1980 (David Rand and Lai-Sang Young,
  eds.), Lecture {Notes} in {Mathematics}, Springer Berlin Heidelberg, 1981,
  pp.~366--381 (en).

\bibitem{wright_correlation_1921}
S.~Wright, \emph{Correlation and causation}, Journal of Agricultural Research
  \textbf{20} (1921), 557--585.

\bibitem{yang_deepwriterid:_2015}
Weixin Yang, Lianwen Jin, and Manfei Liu, \emph{{DeepWriterID}: {An}
  {End}-to-end {Online} {Text}-independent {Writer} {Identification} {System}},
  arXiv:1508.04945 [cs, stat] (2015), arXiv: 1508.04945.

\bibitem{zimmerman_dissociating_2018}
Benjamin~J. Zimmerman, Ivan Abraham, Sara~A. Schmidt, Yuliy Baryshnikov, and
  Fatima~T. Husain, \emph{Dissociating tinnitus patients from healthy controls
  using resting-state cyclicity analysis and clustering}, Network Neuroscience
  (2018), 1--23.

\end{thebibliography}

\end{document}